\def\section{\@startsection {section}{1}{\z@}{-2.5ex plus -1ex minus
 -.2ex}{1.3ex plus .2ex}{\large\bf}}
\def\subsection{\@startsection{subsection}{2}{\z@}{-2.25ex plus%
 -1ex minus -.2ex}{0.5ex plus .2ex}{\bf}}
\newcommand{\R}{\mathbb{R}}
\newcommand{\N}{\mathbb{N}}
\newcommand{\C}{\mathbb{C}}
\newcommand{\Z}{\mathbb{Z}}
\newtheorem{thm}{Theorem}[section]
\newtheorem{prop}[thm]{Proposition}
\newtheorem{cor}[thm]{Corollary} 
\newtheorem{dfn}[thm]{Definition}
\theoremstyle{definition}
\newcommand{\lf}[1]{\sigma^{ #1}}
\numberwithin{equation}{section}
\begin{document}
\parskip 6pt
\parindent 0pt

\vspace{.4cm}

\begin{center}
{\Large \bf Cartan Connections and Integrable Vortex Equations}

\baselineskip 18pt

\vspace{0.4 cm}

{\bf Calum Ross}\\
{Department of Mathematics, University College London, London WC1E 6BT, United Kingdom\\
Department of Physics and Research and Education Center for Natural Sciences, Keio
University, Hiyoshi 4-1-1, Yokohama, Kanagawa 223-8521, Japan\\
calum.ross@ucl.ac.uk\\
December 2021}
\end{center}
\abstract{We demonstrate that integrable abelian vortex equations on constant curvature Riemann surfaces can be reinterpreted as flat non-abelian Cartan connections. By lifting to three dimensional group manifolds we find higher dimensional analogues of vortices. These vortex configurations are also encoded in a Cartan connection. We give examples of different types of vortex that can be interpreted this way, and compare and contrast this Cartan representation of a vortex with the symmetric instanton representation.}

\section{Introduction}
In this paper we will finish the program started in \cite{RS1}, and furthered in \cite{RS2}, of relating all five of the integrable abelian vortex equations from \cite{Manton1} to the geometry of three dimensional Lie groups. In particular we demonstrate that an abelian vortex is equivalent to a flat non-abelian connection.

This relationship showcases the role that constant curvature geometries play in the construction of explicit vortex configurations. The most common way to see this relationship is to consider the Taubes equation satisfied by the modulus of the Higgs field, in the integrable cases it reduces to the Liouville equation. The Liouville equation is satisfied by the conformal factor for a metric with constant Gauss curvature. This leads to the interpretation of the modulus of the Higgs field as a conformal factor for a constant curvature metric. This metric is often called the Baptista metric \cite{Baptista}.

The standard Abelian-Higgs model is a two dimensional model of gauged vortices. The model consists of a complex scalar field $\phi$ called the Higgs field and a $U(1)$ gauge potential $a$. On a Riemann surface $M_{0}$ with the conformal factor $\Omega_{0}$,  the Abelian-Higgs model at critical coupling has the static energy functional \cite{MS}
\begin{equation}
E=\frac{1}{2}\int\left(\frac{B^{2}}{\Omega_{0}^{2}}+\frac{1}{\Omega_{0}}\overline{D_{i}\phi}D^{i}\phi+\frac{1}{4}\left(1-|\phi|^{2}\right)^{2} \right)d\text{Vol},
\label{AH model}
\end{equation}
with $B=f_{12}=\partial_{1}a_{2}-\partial_{2}a_{1}$.
This can be rewritten using a Bogomol'nyi argument to see that the energy is bounded below,
\begin{equation}
E\geq \pi |N|,
\end{equation}
with $N$ the winding number of the field. When $N>0$ the minimisers solve the first order, Bogomol'nyi, equations
\begin{align}
D_{\bar{z}}\phi&= \left(\partial_{\bar{z}}-ia_{\bar{z}}\right)\phi=0, \label{covariant holomorphic}\\ 
B&=\frac{\Omega_{0}}{2}\left(1-|\phi|^{2}\right),\label{eq:hyperbolic vortex part 2}
\end{align}
called the vortex equations. For $N<0$ the first equation, \eqref{covariant holomorphic}, becomes $D_{z}\phi=0$.

Decomposing the Higgs field as $\phi=e^{u+i\chi}$, and taking account of the singularities of $h$ at the zeros, $Z_{r}$ $r=1,\dots , N$ possibly repeated, of the Higgs field, the Bogomol'nyi equations can be converted into the Taubes equation
\begin{equation}
-\frac{4}{\Omega_{0}}\partial_{z}\partial_{\bar{z}}u=1- e^{2u}-\frac{4\pi}{\Omega_{0}} \sum_{r=1}^{N}\delta(z-Z_{r}). \label{Taubes eq on RS}
\end{equation}

A detailed study of this equation, for the case of the Abelian-Higgs model on the plane, is given in \cite{JT}. From a mathematical point of view this model is constructed from the data of a Riemann surface $M_{0}$, a connection $a$ and a smooth complex section $\phi$ of a complex line bundle over $M_{0}$. The pair $(\phi,a)$ is called a vortex.

Fig.~\ref{fig:unified picture summary figure} summarises the general story  followed in this paper, and demonstrates the relationship between vortex configurations on Lie groups in the top left and vortices on Riemann surfaces in the bottom left. It guides how we proceed in this paper starting with explaining the vortex story along the bottom line, before moving on to the three dimensional story  which give the details of the upper part of the figure. 

\begin{figure}[!htbp]
 \centering
\includegraphics[width=8truecm]{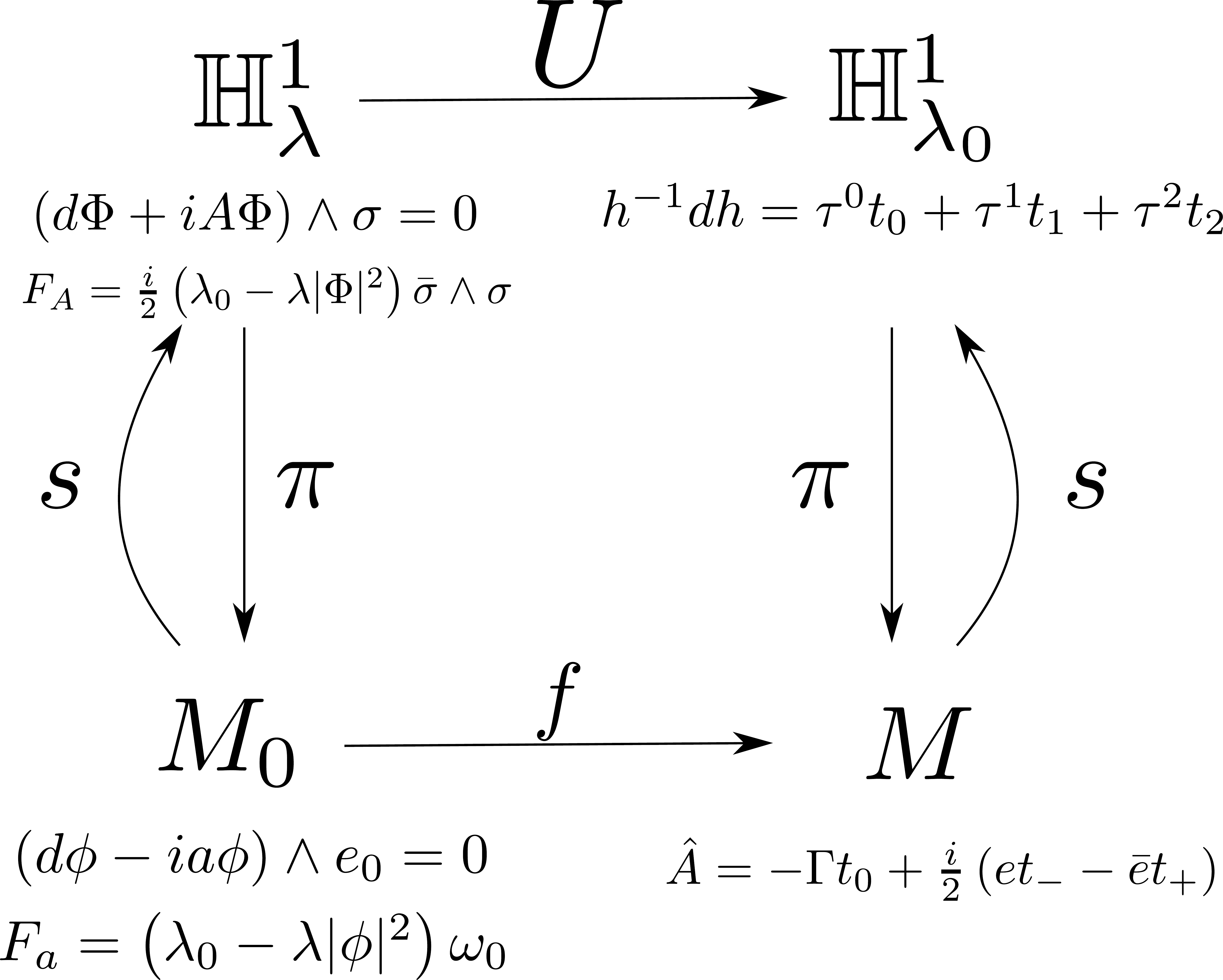}
 \caption{This summaries the four sets of equations and spaces that we relate in this paper. An extension on the left hand side relating the equations on $\mathbb{H}^{1}_{\lambda_{0}}$ to vortex magnetic modes on flat spaces can be included when $\lambda_{0}\neq 0$. This figure is adapted from one in \cite{Ross_thesis}.}
 \label{fig:unified picture summary figure}
\end{figure}

The paper is ordered as follows. In Sec.~\ref{sec:notation and conventions} we state our conventions for the  geometry of Lie groups and Riemann surfaces. Following this we demonstrate how to encode the structure and Gauss equations within a non-abelian flat connection, and how this flat connection descends from the Maurer-Cartan form on a Lie group.  This describes the right hand side of Fig.~\ref{fig:unified picture summary figure}.

Sec.~\ref{sec:vortex equations} summarises results about the integrable abelian vortex equations of \cite{Manton2}, as well as a discussion of the geometric interpretations of these vortices. These geometric interpretations are; as deformations of the metric to introduce degeneracies, an idea introduced in \cite{Baptista}, and as non-abelian Cartan connections encoding a degenerate frame.

Next in Sec.~\ref{sec:vortex configurations} we introduce and study the three dimensional generalisation of a vortex, a vortex configuration. Vortex configurations are also given by a flat connection, this time the pull back of the Maurer-Cartan one-form by a bundle map. This relationship between vortex configurations and flat connections is the key result of the section, and we use it to relate vortex configurations to vortices. Sec.~\ref{sec:magnetic modes} is a comment about the construction of solutions to massless Dirac equations from vortices. It includes extensions of the results of \cite{RS1,RS2} and suggestions for future work.

Then Sec.~\ref{sec:vortices and instantons} compares the non-abelian connections introduced here to describe vortices with the symmetric instantons given in \cite{CD}. The explicit form of both connections is given and  evidence for a conjectured duality between the different vortex equations is discussed. Finally Sec.~\ref{sec:conclusion} summarises the paper and gives some future directions of research.
\section{Lie groups and Cartan connections}
\label{sec:notation and conventions}
\subsection{Group conventions}
To understand how to read Fig.~\ref{fig:unified picture summary figure} we need to explain what the notation  $\mathbb{H}^{1}_{\lambda}$ means. We are interested in the three Lie groups $SU(2)$, $SE_{2}$, and $SU(1,1)$. Respectively these are the group of determinant one $2\times 2$ unitary matrices, the component of the Euclidean group in two dimensions connected to the identity, and the group of $2\times 2$ pseudo-unitary matrices. For a more concrete realisation of the groups we take the generators to be
\begin{equation}
t_{0}=-\frac{i}{2}\begin{pmatrix}
1&0\\0&-1
\end{pmatrix}, \quad
t_{1}=-\frac{i}{2}\begin{pmatrix}
0&-\lambda\\1&0
\end{pmatrix}, \quad 
t_{2}=\frac{1}{2}\begin{pmatrix}
0&\lambda\\1&0
\end{pmatrix},
\end{equation} 
with the commutation relations
\begin{equation}
[t_{a},t_{b}]=C_{ab}^{\,\,\, c}t_{c} \qquad \text{with} \quad C_{01}^{\,\,\,2}=1, \quad C_{02}^{\,\,\,1}=-1,\quad C_{12}^{\,\,\,0}=-\lambda,
\end{equation}
and all others vanishing. We use the notation $\mathbb{H}^{1}_{\lambda}$ for this group\footnote{This group is equivalent to the group $G_{C}$ considered in \cite{CD}. In \cite{CD} the authors pick the generators $J_{a}=-t_{a}^{T}$ and $C=-\lambda$. The conventions here are chosen so that for $\lambda=-1$ they match those in \cite{RS1}.} with;
\begin{equation}
\mathbb{H}^{1}_{-1}=SU(2), \quad \mathbb{H}^{1}_{0}=SE_{2}, \quad \mathbb{H}^{1}_{1}=SU(1,1).
\end{equation}   

It is convenient to introduce the complex combinations 
\begin{equation}
t_{\pm}=t_{1}\pm it_{2} \label{complex combination generators}
\end{equation}
which satisfy
\begin{equation}
[t_{0},t_{\pm}]=\mp it_{\pm}, \quad [t_{+},t_{-}]=2i\lambda t_{0}.
\end{equation}

The first of these can be interpreted as $t_{0}$ defining a complex structure on its complement such that $(t_{+}) t_{-}$ is (anti)-holomorphic. 

The group has inverse metric
\begin{equation}
g^{ab}=\text{diag}(-\lambda,1,1) \label{eq:inverse metric on group}
\end{equation}
which is used to raise and lower group indices, and is degenerate in the $\lambda=0$ case.

As a submanifold of $\C^{2}$, $\mathbb{H}_{\lambda}^{1}$ is defined as
\begin{equation}
\mathbb{H}_{\lambda}^{1}=\{(z_{1},z_{2})\in\C^2 \vert \vert z_{1}\vert-\lambda \vert z_{2}\vert=1\},
\end{equation}
the signature of the submanifold depends on the sign of $\lambda$. The complex coordinates $(z_{1},z_{2})$ parametrise a matrix $h\in\mathbb{H}^{1}_{\lambda}$ through
\begin{equation}
h=\begin{pmatrix}
z_{1}&\lambda \bar{z}_{2}\\
z_{2}&\bar{z}_{1}
\end{pmatrix}.
\end{equation}

As $\mathbb{H}^{1}_{\lambda}$ is a Lie group there is a real, left-invariant Maurer-Cartan one-form encoding the geometry,
\begin{equation}
h^{-1}dh=\sigma^{0}t_{0}+\sigma^{1}t_{1}+\sigma^{2}t_{2}. \label{eq:MC one-form}
\end{equation}
We say they encode the geometry as they satisfy the structure equations
\begin{equation}
d\sigma^{a}=-\frac{1}{2}C_{bc}^{\;\;\;a}\lf{b}\wedge\lf{c},
\end{equation}
where the $C_{bc}^{\;\;\;a}$ are the structure constants of the group. As a precursor to what we will consider later, observe that $h^{-1}dh$ can be viewed as a $\mathbb{H}^{1}_{\lambda}$ valued connection on $\mathbb{H}^{1}_{\lambda}$, whose flatness is equivalent to the structure equations. Encoding the geometry of manifolds in terms of flat connections is a central theme of Cartan geometry, see the textbook~\cite{Sharpe} and the PhD thesis~\cite{Wise} for a general discussion of Cartan geometry.

Throughout it is convenient to work with the complex combinations
\begin{equation}
\sigma=\lf{1}+i\lf{2}, \qquad \bar{\sigma}=\lf{1}-i\lf{2},
\end{equation}
which obey
\begin{equation}
d\sigma=i\sigma\wedge \lf{0}, \qquad d\lf{0}=\frac{i\lambda}{2}\sigma\wedge\bar{\sigma}. \label{eq:3d structure equations}
\end{equation}
In terms of the complex coordinates the left invariant one-forms have the explicit expressions
\begin{equation}
\sigma=2i\left(z_{1}dz_{2}-z_{2}dz_{1}\right), \quad \lf{0}=i\left(\bar{z}_{1}dz_{1}-\lambda\bar{z}_{2}dz_{2}-z_{1}d\bar{z}_{1}+\lambda z_{2}d\bar{z}_{2}\right). \label{eq:left invariant frame}
\end{equation}
In terms of the left invariant one-forms the metric and orientation\footnote{The slightly unconventional ordering is so that it makes contact with the volume form on $\R^{3}$ after stereographic projection in the $\lambda=1,-1$ cases.} are
\begin{align}
ds^{2}&=\frac{1}{4}\left(-\frac{1}{\lambda}\left(\sigma^{0}\right)^{2}+\left(\sigma^{1}\right)^{2}+\left(\sigma^{2}\right)^{2}\right),\label{eq:metric using left-invariant one-forms}\\
\text{Vol}_{\mathbb{H}^{1}_{\lambda}}&=\frac{1}{8}\sigma^{1}\wedge\sigma^{0}\wedge\sigma^{2}\label{eq:orientation using left-invariant one-forms}.
\end{align}
The metric is singular in the $\lambda=0$ case but the only problem due to this is that we are unable to construct zero modes on $\mathbb{H}_{0}^{1}$ in Sec.~\ref{sec:magnetic modes} .

The dual left-invariant vector fields, $X_{a}$, generate the right action $h\to h t_{a}$ and have the commutators
\begin{equation}
[X_{a},X_{b}]=C_{ab}^{\;\;\;c}X_{c}.
\end{equation}
In terms of the combinations
\begin{equation}
X_{\pm}=X_{1}\pm iX_{2}
\end{equation}
we have
\begin{equation}
[X_{0},X_{\pm}]=\mp iX_{\pm}, \qquad [X_{+},X_{-}]=2\lambda iX_{0}.
\end{equation}
In terms of the complex coordinates the left invariant vector fields take the form
\begin{align}
X_{0}&=-\frac{i}{2}\left(z_{1}\partial_{1}+z_{2}\partial_{2}-\bar{z}_{1}\bar{\partial}_{1}-\bar{z}_{2}\bar{\partial}_{2} \right),\\
X_{-}&=-i\left(\bar{z}_{1}\partial_{2}+\lambda \bar{z}_{2}\partial_{1}\right),\\
X_{+}&=\overline{X_{-}}.,
\end{align}
where we have used $\partial_{i}=\frac{\partial}{\partial z_{i}}$. The only non zero pairing are
\begin{equation}
\lf{0}(X_{0})=1, \qquad \sigma(X_{-})=\bar{\sigma}(X_{+})=2.
\end{equation}

A key feature  of $\mathbb{H}^{1}_{\lambda}$ is that it is a circle fibration over a constant curvature Riemann surface $M$. For $\lambda=-1$ this is the familiar Hopf fibration, while in the other cases we have a trivial bundle. The projection is
\begin{equation}
\pi:\mathbb{H}^{1}_{\lambda}\to M, \qquad h\mapsto z=\frac{z_{2}}{z_{1}}. 
\end{equation}
There is also the familiar section, local when $\lambda=-1$ but global otherwise,
\begin{equation}
s:z\mapsto \frac{1}{\sqrt{1-\lambda\vert z\vert}}\begin{pmatrix}
1&\lambda\bar{z}\\
z&1
\end{pmatrix}.\label{eq:bundle section}
\end{equation}
This enables us to relate our Maurer-Cartan one form to the Cartan connection for the Riemann surface $M$.

The group manifold $\mathbb{H}^{1}_{\lambda}$ is not simply connected when $\lambda=0,1$. This is because topologically $\mathbb{H}^{1}_{-1}=S^{3},\,\mathbb{H}^{1}_{0}=\R^{2}\times S^{1}$, and  $\mathbb{H}^{1}_{1}=H^{2}\times S^{1}$. The generator of the fundamental group is the curve
\begin{equation}
\gamma=\{e^{\varphi t_{0}}\in \mathbb{H}^{1}_{\lambda}\vert \varphi\in[0,4\pi)\}.
\end{equation}
This curve is contractable when $\lambda=-1$ since $\pi_{1}(\mathbb{H}^{1}_{\lambda})=\pi_{1}(S^{3})=0$.

When it is non-trivial, only flat connections with a prescribed holonomy around $\gamma$ can be globally trivialised. We will encounter this constraint when discussing vortices on the group manifold in Sec.~\ref{sec:vortex configurations}.

\subsection{Two dimensional geometry}
On a Riemann surface $M$ with constant Gauss curvature $K$ we work in local complex coordinates $z$. As we are considering $M$ to be either $S^{2}, H^{2}$ or $\R^{2}$, $z$ is a global coordinate except on $S^{2}$. The Riemann surface has metric\footnote{Note that with this choice of metric $\R^{2}$ with $K=0$ has $ds^{2}=4dzd\bar{z}$. This may seem an unusual choice but it facilitates the same language to be used for all three surfaces.}
\begin{equation}
ds^{2}=\Omega dz d\bar{z}=\frac{4}{\left(1+K\vert z\vert^{2}\right)^{2}}dzd\bar{z}.
\end{equation}

This metric admits the (local) complexified coframe
\begin{equation}
e=\frac{2dz}{1+K\vert z\vert^{2}}. \label{eq:complexified local frame}
\end{equation}
The geometry of the Riemann surface is encoded in the structure and Gauss equations,
\begin{align}
de-ie\wedge\Gamma&=0 \label{eq:structure eq}\\
d\Gamma=\mathcal{R}&=\frac{i}{2}Ke\wedge\bar{e} \label{eq:Gauss eq}
\end{align}
where 
\begin{equation}
\Gamma=iK\frac{zd\bar{z}-\bar{z}dz}{1+K\vert z\vert^{2}}\label{eq:spin connection}
\end{equation}
is the spin connection one-form and $\mathcal{R}$ is the curvature two-form.
When we have two Riemann surfaces we denote the one with Gauss curvature $K$ by $M$ and the one with Gauss curvature $K_{0}$ by $M_{0}$. There are corresponding $0$ subscripts on the coframe fields, spin connection, and curvature two-form. 

There are two results worth noting here. The first is that the structure and Gauss equations can be interpreted as the flatness of a Cartan connection, $\hat{A}$ . The second relates $\hat{A}$ to the Maurer-Cartan one-form on $\mathbb{H}^{1}_{\lambda}$ when $\lambda=-K$.

\begin{prop} \label{prop:surface geometry flat connection}
The structure and Gauss equations, \eqref{eq:structure eq} and \eqref{eq:Gauss eq}, for the frame, \eqref{eq:complexified local frame}, and spin connection \eqref{eq:spin connection}, are equivalent to the flatness of the $\text{Lie}(\mathbb{H}^{1}_{\lambda})$ valued connection
\begin{equation}
\hat{A}=-\Gamma t_{0}+\frac{i}{2}\left(e t_{-}-\bar{e}t_{+}\right). \label{eq:Surface flat cartan connection}
\end{equation}
where $K=-\lambda$ is the Gauss curvature.
\end{prop}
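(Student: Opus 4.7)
The plan is to verify the proposition by a direct computation of the curvature two-form $F=d\hat{A}+\hat{A}\wedge\hat{A}$ and show that the vanishing of $F$ is exactly equivalent to the pair of equations \eqref{eq:structure eq} and \eqref{eq:Gauss eq}, subject to the identification $\lambda = -K$. I would first rewrite $\hat{A}$ in the $\{t_0,t_+,t_-\}$ basis, where its components are $\hat{A}^{0}=-\Gamma$, $\hat{A}^{+}=-\tfrac{i}{2}\bar e$, $\hat{A}^{-}=\tfrac{i}{2}e$. This basis is convenient because the only nonvanishing brackets are $[t_{0},t_{\pm}]=\mp i t_{\pm}$ and $[t_{+},t_{-}]=2i\lambda t_{0}$, which makes the bookkeeping for $\hat A\wedge\hat A=\tfrac12[\hat A,\hat A]$ very short.

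Next I would compute $d\hat{A}$ termwise, obtaining $d\hat{A}=-d\Gamma\, t_{0}+\tfrac{i}{2}(de\, t_{-}-d\bar e\, t_{+})$, and then collect $F$ by Lie algebra component. For the $t_{0}$ component only $[t_{+},t_{-}]$ contributes, giving
\begin{equation}
F^{0}= -d\Gamma - \tfrac{i\lambda}{2}\,e\wedge\bar e,
\end{equation}
which vanishes precisely when $d\Gamma=\tfrac{i}{2}Ke\wedge\bar e$ with $K=-\lambda$, i.e.\ the Gauss equation \eqref{eq:Gauss eq}. For the $t_{+}$ component only $[t_{0},t_{+}]$ contributes, yielding
\begin{equation}
F^{+}= -\tfrac{i}{2}d\bar e - \tfrac12\,\Gamma\wedge\bar e,
\end{equation}
which vanishes iff $d\bar e=-i\bar e\wedge\Gamma$, the complex conjugate of the structure equation \eqref{eq:structure eq} (using that $\Gamma$ is real). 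The $t_{-}$ component is the conjugate statement and is equivalent to \eqref{eq:structure eq} itself.

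Combining the three components, $F=0$ holds if and only if the structure equation and the Gauss equation hold simultaneously with $\lambda=-K$, which is the claim. The main bookkeeping care is in the factor of $i$ and the sign conventions in passing between $\{t_{1},t_{2}\}$ and $\{t_{\pm}\}$ via \eqref{complex combination generators}; once these are aligned with the bracket relations stated below that equation, the rest is mechanical. I do not anticipate a genuine obstacle, only the need to track signs consistently so that the coefficient $-\tfrac{i}{2}(K+\lambda)$ in $F^{0}$ emerges with the right sign to force $K=-\lambda$.
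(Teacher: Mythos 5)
Your strategy is the same as the paper's proof: expand $\hat{A}$ in the $\{t_{0},t_{+},t_{-}\}$ basis, compute $F=d\hat{A}+\frac{1}{2}[\hat{A},\hat{A}]$ component by component, and identify the vanishing of the $t_{0}$ coefficient with the Gauss equation and of the $t_{\pm}$ coefficients with the structure equations. Your $t_{0}$ component is correct and matches the paper exactly: $F^{0}=-d\Gamma-\frac{i\lambda}{2}e\wedge\bar{e}$, forcing $K=-\lambda$. However, your $t_{+}$ component has a sign error. With $\hat{A}^{0}=-\Gamma$ and $\hat{A}^{+}=-\frac{i}{2}\bar{e}$, the bracket contributes
\begin{equation}
\hat{A}^{0}\wedge\hat{A}^{+}\,[t_{0},t_{+}]=(-\Gamma)\wedge\left(-\tfrac{i}{2}\bar{e}\right)(-i)\,t_{+}=+\tfrac{1}{2}\,\Gamma\wedge\bar{e}\,t_{+},
\end{equation}
so the correct component is $F^{+}=-\tfrac{i}{2}d\bar{e}+\tfrac{1}{2}\Gamma\wedge\bar{e}=-\tfrac{i}{2}\left(d\bar{e}+i\Gamma\wedge\bar{e}\right)$, which is what appears in the paper's displayed curvature, not $-\tfrac{i}{2}d\bar{e}-\tfrac{1}{2}\Gamma\wedge\bar{e}$ as you wrote; you appear to have dropped one of the two minus signs in the product of components. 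Consequently $F^{+}=0$ is equivalent to $d\bar{e}=-i\Gamma\wedge\bar{e}=i\bar{e}\wedge\Gamma$, not to $d\bar{e}=-i\bar{e}\wedge\Gamma$.

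The reason your slip is easy to miss is that it interacts with an ordering discrepancy in the paper itself: the explicit data \eqref{eq:complexified local frame} and \eqref{eq:spin connection} satisfies $de=\frac{2Kz}{(1+K|z|^{2})^{2}}dz\wedge d\bar{z}=i\Gamma\wedge e$, so the structure equation that flatness of $\hat{A}$ actually reproduces is $de-i\Gamma\wedge e=0$ (the form in the paper's proof display), whereas \eqref{eq:structure eq} as printed has the ordering $e\wedge\Gamma$, which differs by a sign. Your sign error in $F^{+}$ exactly reproduces the conjugate of the printed form, i.e.\ two sign discrepancies cancel, so your apparent confirmation of \eqref{eq:structure eq} is an artifact rather than a verification. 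The fix is one line: restore the $+\frac{1}{2}\Gamma\wedge\bar{e}$ term (and correspondingly $F^{-}=\frac{i}{2}(de-i\Gamma\wedge e)$), after which the overall logic of your argument, flatness if and only if the structure and Gauss equations hold with $\lambda=-K$, goes through unchanged and coincides with the paper's proof.
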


The proof is a straightforward computation of the curvature of $\hat{A}$.
\begin{proof}
The curvature of $\hat{A}$ is 
\begin{align}
F_{\hat{A}}	&=d\hat{A}+\frac{1}{2}[\hat{A},\hat{A}],\\
			&=-\left(\mathcal{R}+\lambda\frac{i}{2}e\wedge\bar{e}\right)t_{0}+\frac{i}{2}\left(de-i\Gamma\wedge e\right)t_{-}-\frac{i}{2}\left(d\bar{e}+i\Gamma\wedge \bar{e}\right)t_{+}.
\end{align}
The vanishing of the coefficient of $t_{0}$ is equivalent to the Gauss equation, \eqref{eq:Gauss eq} with curvature $K=-\lambda$, and the vanishing of the $t_{\pm}$ coefficients  is equivalent to the structure equations, \eqref{eq:structure eq}.
\end{proof}

\begin{prop} \label{prop:surface connection from MC}
Using the local section \eqref{eq:bundle section} the Cartan connection $\hat{A}$ on $M$, \eqref{eq:Surface flat cartan connection}, is trivialised as the pullback of the Maurer-Cartan one-form, \eqref{eq:MC one-form},
\begin{equation}
\hat{A}=s^{*}\left(h^{-1}dh\right).
\end{equation}
\end{prop}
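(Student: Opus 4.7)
The plan is a direct computation: pull the explicit coordinate formulas \eqref{eq:left invariant frame} for $\sigma$ and $\sigma^{0}$ back along the section $s$, rewrite the Maurer--Cartan form in terms of the $t_{0}, t_{\pm}$ basis, and match coefficients against $\hat{A}$ in \eqref{eq:Surface flat cartan connection}. First I would rewrite
\begin{equation*}
h^{-1}dh \;=\; \sigma^{0}t_{0} + \tfrac{1}{2}\bigl(\sigma\, t_{-} + \bar{\sigma}\, t_{+}\bigr),
\end{equation*}
using $t_{1}=\tfrac{1}{2}(t_{+}+t_{-})$, $t_{2}=\tfrac{1}{2i}(t_{+}-t_{-})$ and $\sigma = \sigma^{1}+i\sigma^{2}$, so that the identification with $\hat{A}=-\Gamma t_{0}+\tfrac{i}{2}(e\,t_{-}-\bar{e}\,t_{+})$ reduces to the two scalar identities
\begin{equation*}
s^{*}\sigma^{0}=-\Gamma,\qquad s^{*}\sigma = i\,e,
\end{equation*}
with the second automatically giving $s^{*}\bar{\sigma}=-i\bar{e}$ by complex conjugation.

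Next I would parametrise the section \eqref{eq:bundle section} by setting $\rho = (1-\lambda|z|^{2})^{-1/2} = (1+K|z|^{2})^{-1/2}$, so that along $s$ one has $z_{1}=\rho$ and $z_{2}=z\rho$, and compute
\begin{equation*}
d\rho \;=\; \tfrac{\lambda}{2}\rho^{3}\bigl(\bar{z}\,dz+z\,d\bar{z}\bigr),\qquad dz_{2}\;=\;z\,d\rho+\rho\,dz.
\end{equation*}
Substituting into $\sigma = 2i(z_{1}dz_{2}-z_{2}dz_{1})$ the $d\rho$ terms cancel and leave $s^{*}\sigma = 2i\rho^{2}dz = \frac{2i\,dz}{1+K|z|^{2}} = i\,e$ on the nose. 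Substituting into $\sigma^{0}$, the diagonal $\bar{z}_{1}dz_{1}-z_{1}d\bar{z}_{1}$ piece vanishes, while the off-diagonal piece collapses to $\lambda\rho^{2}(z\,d\bar{z}-\bar{z}\,dz)$, yielding $s^{*}\sigma^{0}=iK\frac{\bar{z}dz-z d\bar{z}}{1+K|z|^{2}}=-\Gamma$ using $K=-\lambda$ and \eqref{eq:spin connection}.

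Neither step is conceptually delicate; the only thing to watch is the sign conventions $K=-\lambda$ and the factors of $i$ coming from the $t_{\pm}$ rewriting, both of which are the natural source of bookkeeping errors. As a consistency check one can note that, independently, Proposition~\ref{prop:surface geometry flat connection} shows $\hat{A}$ is flat while $s^{*}(h^{-1}dh)$ is flat because the Maurer--Cartan form is; since both are $\text{Lie}(\mathbb{H}^{1}_{\lambda})$-valued and share the same $t_{\pm}$ components (by the $\sigma$ calculation above), their difference would be a $t_{0}$-valued closed one-form whose only admissible value is $-\Gamma$, giving a second route to the identity.
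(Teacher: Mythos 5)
Your proof is correct and is essentially the paper's own argument in expanded form: the paper's proof consists precisely of computing $s^{*}\sigma = ie$ and $s^{*}\sigma^{0} = -\Gamma$ from the explicit section \eqref{eq:bundle section} and then pulling back the Maurer--Cartan form \eqref{eq:MC one-form}, which is exactly what you do (with the bookkeeping spelled out). The closing ``second route'' via flatness is a nice sanity check, though as stated it is slightly loose --- closedness of the $t_{0}$-valued difference alone does not pin it down; you would need the $t_{\pm}$ flatness equations $(\gamma-\Gamma)\wedge e = 0 = (\gamma-\Gamma)\wedge\bar{e}$ to force the difference to vanish --- but this does not affect your main computation.
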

\begin{proof}
To prove use the explicit expression for the section, \eqref{eq:bundle section}, to compute $s^{*}\sigma=ie$ and $s^{*}\sigma^{0}=-\Gamma$. Then directly computing the pullback of Eq.~\eqref{eq:MC one-form} leads to $\hat{A}$.
\end{proof}

Going the other way round $h^{-1}dh$ can be expressed in terms of $\pi^{*}\hat{A}$. Unfortunately, this is only true up to a singular gauge transformation as
\begin{equation}
\pi^{*} e=-i\frac{\bar{z}_{1}}{z_{1}}\sigma, \qquad \pi^{*}\Gamma =-\sigma^{0}+id\ln\left(\frac{z_{1}}{\bar{z}_{1}}\right).
\end{equation}

These propositions convey a key concept of this work; we encode equations on a Riemann surface as flat connections, and relate these to the Maurer-Cartan one-form on $\mathbb{H}^{1}_{\lambda}$.

\section{Integrable vortex equations}
\label{sec:vortex equations}
\subsection{Vortices on Riemann surfaces}
Turning now to vortex equations on Riemann surfaces. The vortex solutions of Eqs.~\eqref{covariant holomorphic} and \eqref{eq:hyperbolic vortex part 2} are known as hyperbolic vortices, since the equations are integrable on $H^{2}$. These hyperbolic vortex equations were generalised in \cite{Manton2} to give integrable vortex equations on a more general Riemann surface $M_{0}$. The hyperbolic case has a long history with the first solutions given in \cite{Witten1}, while the spherical case $M_{0}=S^{2}$ was first considered in \cite{Manton1,Popov}, but the language was unified in \cite{Manton2} to give five\footnote{Or should that be 6 equations, the $\lambda=\lambda_{0}=0$ case was not considered in \cite{Manton2} but was called the Laplace vortex equation in \cite{CD}. It corresponds to a covariantly holomorphic section $\phi$ and a flat connection $a$. These Laplace vortices fit into the Cartan framework given here.} integrable vortex equations. The equations can be written down on any Riemann surface but the integrability relies on having constant curvature.

The vortex equations involve two parameters, suggestively called $\lambda_{0}$ and $\lambda$, and describe a pair $(\phi,a)$ of a connection $a$ and a section $\phi$ of a complex line bundle over $M_{0}$. When $M_{0}$ is non-compact appropriate asymptotics, $\vert\phi\vert\to 1$ on $\partial M_{0}$, need to be imposed to ensure finite energy \cite{MS,JT}. 
\begin{dfn} \label{exotic vortex def}
A $(\lambda_{0},\lambda)$ vortex is a pair $(\phi,a)$ of a connection, $a$, and a section, $\phi$, of a complex line bundle over $M_{0}$ which satisfy the $(\lambda_{0},\lambda)$ vortex equations
\begin{equation}
\left(d\phi-ia\phi\right)\wedge e_{0}=0 \qquad F_{a}=da=\left(\lambda_{0}-\lambda|\phi|^{2}\right)\omega_{0}. \label{eq:Vortex equations}
\end{equation}
\end{dfn}
Here
\begin{equation}
\omega_{0}=\frac{i}{2}e_{0}\wedge\bar{e}_{0},
\end{equation} is the K\"{a}hler form on $M_{0}$.

From \cite{Manton2} solutions are given by rational maps $f:M_{0}\to M$ where $M_{0}, M$ are Riemann surfaces with constant Gauss curvature $-\lambda_{0},\lambda$ respectively. There is the following direct way to solve \eqref{eq:Vortex equations}, define
\begin{equation}
f^{*}e=\phi e_{0}, \qquad a=f^{*}\Gamma -\Gamma_{0}.
\end{equation}
Then pulling back Eq.~\eqref{eq:structure eq} by $f$ gives the first vortex equation, while pulling back the Gauss equation \eqref{eq:Gauss eq}, noting $f^{*}\left(e\wedge\bar{e}\right)=\vert \phi\vert e_{0}\wedge \bar{e}_{0}$ and using the Gauss equation on $M_{0}$, leads to the second vortex equation. The data of a holomorphic map between two constant curvature Riemann surfaces is thus all we need to construct a solution to the vortex equations.

Another way to show that is to reduce the vortex equations to the Lioville equation. Decomposing the Higgs field as $\phi=e^{u+i\chi}$ leads to a generalisation of the Taubes equation
\begin{equation}
-\frac{4}{\Omega_{0}}\partial_{z}\partial_{\bar{z}}h=\left(\lambda_{0}-\lambda e^{2h}\right)-\frac{2\pi}{\Omega_{0}}\sum_{r=1}^{N}\delta\left(z-Z_{r}\right),
\end{equation}
with $\Omega_{0}$ the conformal factor on $M_{0}$ and $Z_{r}\in\C$ the zeros of $\phi$. A scaling argument from \cite{Manton2} shows that there are five integrable cases:
\begin{itemize}
\item Hyperbolic vortices $\lambda_{0}=\lambda=1$,
\item Popov vortices $\lambda_{0}=\lambda=-1$,
\item Jackiw-Pi vortices $\lambda_{0}=0,\,\lambda=-1$,
\item Ambj\o rn-Olesen vortices $\lambda_{0}=1=-\lambda$,
\item Bradlow vortices $\lambda_{0}=1, \, \lambda=0$.
\end{itemize}
As noted above, the case of $\lambda_{0}=\lambda=0$, sometimes called Laplace vortices, could also be included. This naturally fits into the framework that we discuss here. Explicit expressions for $\phi$ and $a$ in terms of $f$ are given in \cite{Manton2}.

\subsection{Baptista metric}
The idea of interpreting vortices geometrically stems from \cite{MR} where the Higgs field  of a Hyperbolic vortex is represented as the ratio of conformal factors
\begin{equation}
\vert \phi\vert^{2}=\frac{f^{*}\Omega}{\Omega_{0}}\left\vert \frac{df}{dz}\right\vert^{2}.
\end{equation}
Then in \cite{Baptista} it was shown that a vortex defines a degenerate conical geometry on $M_{0}$, where the metric has conformal factor $\vert \phi\vert^{2}$ which is zero at the vortex centres. This idea was extended in \cite{Manton2} where it is referred to as the Baptista metric. In the integrable cases the Baptista metric is given by
\begin{equation}
ds^{2}_{B}=f^{*}ds^{2}=\vert\phi\vert^{2}ds^{2}_{0}.\label{eq:Baptista metric}
\end{equation}
This says that a vortex defines a degenerate frame on $M_{0}$, with the vortex equations forming part of the structure and Gauss equations for this frame.

It is important to be aware that since the Baptista metric has the conformal factor $|\phi|^{2}$ it is degenerate at the $N$, not necessarily distinct, zeros of the Higgs field.  As observed in \cite{Baptista} the Riemann curvature 2-form associated with the metric is extended to the zeros by adding delta function singularities
\begin{equation}
\mathcal{R}'=\mathcal{R}_{0}+F_{a}-2\pi\sum_{j=1}^{N}\delta_{Z_{j}}, \label{eq:singular curvature}
\end{equation}
where we use $\delta_{Z_{j}}$ for the two-form Dirac delta supported on the point $Z_{j}$.

This can be understood as the Baptista metric having a conical singularity with surplus angle $2\pi N_{j}$ at a zero of multiplicity $N_{j}$, with $N=\sum_{j} N_{j}$. The local geometry around the point $Z_{j}$ thus resembles a ruffled collar and is sometimes called an Elizabethan geometry.  For the case of Popov vortices on $S^{2}$, $a$ is a connection on a line bundle of even degree, $N=2n-2$  with $n=1, 2,\dots$ etc, and thus
\begin{equation}
\int_{S^{2}}F_{a}=4\pi n-4\pi.
\end{equation}
This is cancelled by the integral over the delta functions and thus
\begin{equation}
\int_{S^{2}}\mathcal{R}'=\int_{S^{2}}\mathcal{R}_{S^{2}}=4\pi,
\end{equation}
which can be interpreted as the Gauss-Bonnet theorem holding for $\mathcal{R}'$. This is in contrast to the pullback of the curvature two-form, $f^{*}\mathcal{R}_{S^{2}}$, which integrates to $4\pi n$ since in this case the map $f:S^{2}\to S^{2}$ has degree $n$ \cite{RS1,Manton2}.

For the other types of vortex we still have Equation \eqref{eq:singular curvature}, and $F_{a}$ still integrates to $2\pi N$, once the appropriate boundary conditions are taken in to account, which is again cancelled by the delta function contribution. However, as $H^{2}$ and $\R^{2}$ are non-compact the integrals of the curvature forms are not defined. One way to get around this is to work on compact manifolds covered by $H^{2}$ or $\R^{2}$, such as a Riemann surface of genus $g>1$ which is the quotient of $H^{2}$ by a Fuchsian group $\Lambda< SU(1,1)$. This complicates the story somewhat so we do not focus on it here.

This example shows that the spin connection of the degenerate frame $\phi e_{0}$, $\tilde{\Gamma}$ differs from the pulled back spin connection $f^{*}\Gamma$ by a contribution due to the zeros of $\phi$ and that this contribution is what leads to the singularities in $\mathcal{R}'$.

\subsection{Vortices as flat connections}
From the discussion of the Baptista metric and Proposition \ref{prop:surface geometry flat connection} it follows that the vortex equations are equivalent to the flatness of the non-abelian connection $f^{*}\hat{A}$. The key observation is that $f^{*}e=\phi e_{0}$ defines a degenerate frame from the vortex, and the structure and Gauss equations for this frame imply the vortex equations of Eq.~\eqref{eq:Vortex equations}.

\begin{cor}\label{cor:vortex as a flat connection}
Given the flat connection $\hat{A}$ defined in Eq.~\eqref{eq:Surface flat cartan connection}, and a holomorphic map $f:M_{0}\to M$, the flatness of
\begin{equation}
f^{*}\hat{A}=-\left(a+\Gamma_{\lambda_{0}}\right)t_{0}+\frac{i}{2}\left(\phi e_{\lambda_{0}}t_{-}-\bar{\phi}\bar{e}_{\lambda_{0}}t_{+}\right), \label{eq:Vortex flat connection 2d}
\end{equation}
is equivalent to the $(\lambda_{0},\lambda)$ vortex equations.
\end{cor}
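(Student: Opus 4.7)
The plan is to compute the curvature of $f^{*}\hat{A}$ directly, following the template set out in the proof of Proposition \ref{prop:surface geometry flat connection} but with the replacements $\Gamma \mapsto a + \Gamma_{\lambda_{0}}$ and $e \mapsto \phi\,e_{\lambda_{0}}$. Carrying that template through produces three Lie algebra components, one along $t_{0}$ and one along each of $t_{\pm}$; since $t_{+}$ and $t_{-}$ carry complex-conjugate coefficients, it suffices to treat $t_{0}$ and $t_{-}$.

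For the $t_{-}$ component I would expand $d(\phi e_{\lambda_{0}}) - i(a + \Gamma_{\lambda_{0}}) \wedge \phi e_{\lambda_{0}}$ by the Leibniz rule and then invoke the structure equation on $M_{0}$, namely $d e_{\lambda_{0}} = i e_{\lambda_{0}} \wedge \Gamma_{\lambda_{0}}$. The $\Gamma_{\lambda_{0}}$ terms cancel and what remains is proportional to $(d\phi - i a\phi) \wedge e_{\lambda_{0}}$, whose vanishing is precisely the first vortex equation of Definition \ref{exotic vortex def}. For the $t_{0}$ component I would use the Gauss equation on $M_{0}$ to rewrite $d \Gamma_{\lambda_{0}}$ as a multiple of the K\"ahler form $\omega_{0} = \tfrac{i}{2}\, e_{\lambda_{0}} \wedge \bar{e}_{\lambda_{0}}$; combined with the $\lambda |\phi|^{2} \omega_{0}$ contribution coming from the commutator $[t_{+}, t_{-}] = 2i\lambda t_{0}$, the coefficient collapses to $F_{a} - (\lambda_{0} - \lambda |\phi|^{2}) \omega_{0}$, whose vanishing is the second vortex equation.

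Because each implication is read off directly from the vanishing of a single Lie algebra component, both directions of the equivalence are handled by the same calculation. The only real obstacle is accurate sign-bookkeeping: one has to keep track of the conventions distinguishing the Gauss curvatures $K = -\lambda$ on $M$ from that on $M_{0}$, and of the orderings $e \wedge \bar{e}$ versus $\bar{e} \wedge e$ when passing from the commutator algebra of the $t_{\pm}$ to $\omega_{0}$. A quicker but less self-contained alternative is to note that $F_{f^{*}\hat{A}} = f^{*} F_{\hat{A}}$ vanishes automatically by Proposition \ref{prop:surface geometry flat connection} whenever $(\phi, a)$ arise from a holomorphic $f$ via $f^{*} e = \phi\, e_{\lambda_{0}}$ and $f^{*} \Gamma = a + \Gamma_{\lambda_{0}}$, which handles one direction instantly and leaves the converse to be read off componentwise as above.
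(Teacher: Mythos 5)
Your proposal is correct and is essentially the paper's own argument: the paper gives no separate proof of this corollary, justifying it exactly as you do via the componentwise flatness computation of Proposition~\ref{prop:surface geometry flat connection} under the substitutions $f^{*}e=\phi e_{0}$ and $a=f^{*}\Gamma-\Gamma_{0}$ (equivalently, by pulling back the structure and Gauss equations along $f$, which is also your shortcut). One caveat on the sign-bookkeeping you rightly flag: the cancellation of the $\Gamma_{\lambda_{0}}$ terms in the $t_{-}$ component requires $de_{0}=i\Gamma_{0}\wedge e_{0}$, the relation actually satisfied by the explicit coframe \eqref{eq:complexified local frame} and spin connection \eqref{eq:spin connection}, rather than the literal $de_{0}=ie_{0}\wedge\Gamma_{0}$ you quote from \eqref{eq:structure eq} --- a sign wobble present in the paper's own proof of Proposition~\ref{prop:surface geometry flat connection} as well, so it does not affect the verdict.
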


Since $f$ is a rational function it can be written as $f(z)=f_{2}(z)/f_{1}(z)$ for two holomorphic functions $f_{1},f_{2}$. The Higgs field $\phi$, and thus the connection $f^{*}\hat{A}$ have zeros, and potentially singularities, determined by the $f_{i}$. To see this explicitly note that
\begin{equation}
f^{*}e=\phi e=2\frac{f_{2}' f_{1}-f_{1}' f_{2}}{\vert f_{1}\vert^{2}+K\vert f_{2}\vert^{2}}\frac{\bar{f}_{1}}{f_{1}}dz,
\end{equation} 
which has zeros at the ramification points of $f$, and singularities at the zeros of $f_{1}$. There are singularities when the coframe $e$ has a singularity, this happens when $M=S^{2}$ as the coframe is local in a patch with coordinate singularity at one of the poles of the sphere. Under $f$ the pre-images of the coordinate singularity of $e$ become the zeros of $f_{1}$. 

For example if $q$ is a zero of $f_{1}$ then near $q$ 
\begin{equation}
\phi e\sim A \frac{\bar{z}-\bar{q}}{z-q}dz,
\end{equation} 
for a constant $A$. 
When $\phi e$ has singularities they are inherited by the vortex Cartan connection $f^{*}\hat{A}$. Thus it is not really defined on the total space $\mathbb{H}^{1}_{\lambda}$ but on 
\begin{equation}
P=\mathbb{H}^{1}_{\lambda}\backslash \bigcup_{j}\pi^{-1}(q_{j}),\label{eq:total space with singularities}
\end{equation}
where the $q_{j}$ are the zeros of $f_{1}$. For the case of Popov vortices an extensive discussion of the singularities and their properties is given in \cite{RS1}.

In the language of Cartan connections our results so far are that $\hat{A}$ is a gauge potential for the Cartan connection describing the geometry of $M$, while $f^{*}\hat{A}$ is a gauge potential for a Cartan connection describing the deformed geometry on $M_{0}$ defined by the vortex $(\phi,a)$.

\section{Vortices on Lie Groups}
\label{sec:vortex configurations}
\subsection{Vortex configurations}
The next important actor in this story are vortex configurations on $\mathbb{H}^{1}_{\lambda_{0}}$. These are the generalisation of vortices to three dimensional group manifolds. Unlike vortices they do not involve sections and connections but give a way of writing the vortex equations in terms of one-forms and complex functions on the total space of the bundle. 

Before discussing vortex configurations we need to understand the equivariant functions on $\mathbb{H}^{1}_{\lambda}$.

\begin{dfn}
\label{equivariant function definition}
The space of equivariant functions over $\mathbb{H}^{1}_{\lambda}$ is defined to be
\begin{equation}
C^{\infty}\left(\mathbb{H}^{1}_{\lambda},\C\right)_{N}=\{F:\mathbb{H}^{1}_{\lambda}\to \C|2iX_{0}F=NF\}, \quad N\in \Z. 
\end{equation}
\end{dfn}

In \cite{RS1} the discussion of equivariant functions followed that in \cite{JS,JS2} with $N\in \N^{0}$. This is because for $SU(2)$, equivariant functions are functions on the Lens space $S^{3}/\Z_{N}$ and are related to sections of the hyperplane bundle. In general it is not a priori clear that we need to impose the same restriction that the degree is a non-negative integer. In practice we only encounter equivariant functions constructed from holomorphic polynomials in $z_{1},z_{2}$ and for these $N\in\N^{0}$. 

A short computation using the local section $s$, \eqref{eq:bundle section}, results in the following commutative diagram
\begin{equation}
\xymatrix{C^{\infty}(\mathbb{H}^{1}_{\lambda},\C)_{N}\ar[r]^{X_{+}}\ar[d]_{s^{*}}& C^{\infty}(\mathbb{H}^{1}_{\lambda},\C)_{N+2} \ar@{->}[d]^{s^{*}}\\  C^{\infty}(M_{\lambda})\ar@{->}[r]_{i(q\bar{\partial}-\lambda\frac{N}{2}z)}& C^{\infty}(M_{\lambda})} \label{cd:X plus holomorphicity general}
\end{equation}
where $q=(1-\lambda|z|^{2})$. The fact that for $\lambda=0$ the vector field $X_{+}$ is the lift of $\bar{\partial}$ is not particularly surprising as in this case $q=1$, $X_{+}=iz_{1}\bar{\partial}_{2}$ and the section $s$ identifies $z_{2}$ with $z$. 

As we will be dealing with two, potentially different, group manifolds $\mathbb{H}^{1}_{\lambda_{0}}$ and $\mathbb{H}^{1}_{\lambda}$ we require separate notation for the geometric objects on the two spaces. We use the notation $s_{a}, \sigma^{a}, X_{a}$ for the generators, left-invariant one-forms and left-invariant vector fields respectively of the source group three manifold, $\mathbb{H}^{1}_{\lambda_{0}}$, and $t_{a},\tau^{a},Y_{a}$ for the same objects on the target group three manifold, $\mathbb{H}^{1}_{\lambda}$. Due to the details of our construction the Cartan connection is always valued in the Lie algebra of the target group.

\begin{dfn} Let $A$ be a one form on $\mathbb{H}^{1}_{\lambda_{0}}$ and $\Phi:\mathbb{H}^{1}_{\lambda_{0}}\to \C$ a complex function. We call the pair $(\Phi,A)$ a vortex configuration if the vortex equations,
\begin{equation}
\left(d\Phi+iA\Phi\right)\wedge\sigma=0\qquad F_{A}=\frac{i}{2}\left(\lambda_{0}-\lambda|\Phi|^{2}\right)\bar{\sigma}\wedge\sigma, \label{eq:3D vortex equations}
\end{equation}
with $F_{A}=dA$, are satisfied.
\end{dfn}

In the $\lambda=\lambda_{0}=-1$ case of \cite{RS1} normalisation and equivariance conditions were included in the definition of vortex configurations. However, in the other cases $A(X_{0})$ is not necessarily an integer  so we do not include a normalisation condition. When we construct examples of vortex configurations they will be normalised.  Equivariance conditions, when $A$ is normalised, follow from the vortex equations, \eqref{eq:3D vortex equations} and Cartan's identity. They are
\begin{align}
\mathcal{L}_{X_{0}}\Phi&=-iA(X_{0})\Phi, \label{eq:higgs field eqivariance} \\
\mathcal{L}_{X_{0}}A&=dA(X_{0}). \label{eq:connection equivariance}
\end{align} 

These vortex equations possess the $U(1)$ gauge invariance
\begin{equation}
(\Phi,A)\mapsto (e^{-i\beta }\Phi, A+d\beta), \qquad \beta\in C^{\infty}(\mathbb{H}^{1}_{\lambda_{0}}). \label{eq:vortex config gauge transform}
\end{equation}

The three dimensional vortex equations in \eqref{eq:3D vortex equations} have clear similarities to the vortex equations in \eqref{eq:Vortex equations}. The left-invariant one forms are the analogue of the complexified coframe. The precise relationship will be established shortly, it relies on the relationship between $e,\bar{e},\Gamma$ and the $\sigma^{i}$ under pullback by $\pi$ and $s$.

We know come to the central theorem of the paper, a method for constructing vortex configurations from bundle maps.

\begin{thm}
\label{vortex config to bundle mp and back}
A vortex configuration on $\mathbb{H}^{1}_{\lambda_{0}}$ determines a gauge potential for a flat $\mathbb{H}^{1}_{\lambda}$ connection of the form
\begin{equation}
\mathcal{A}=\left(A+\lf{0}\right)t_{0}+\frac{1}{2}\Phi\sigma t_{-}+\frac{1}{2}\bar{\Phi}\bar{\sigma}t_{+}. \label{eq:potential in vortex gauge}
\end{equation}
Conversely, a flat $\mathbb{H}^{1}_{\lambda}$ connection $\mathcal{A}$ on $\mathbb{H}^{1}_{\lambda_{0}}$ such that
\begin{equation}
\mathcal{A}(X_{0})=pt_{0}, \qquad \mathcal{A}(X_{-})=\alpha t_{0}+\Phi t_{-}, \label{eq:coefficient conditions}
\end{equation}
for functions $p:\mathbb{H}^{1}_{\lambda_{0}}\to \R$ and $\alpha, \Phi:\mathbb{H}^{1}_{\lambda_{0}}\to \C$ determines a vortex configuration $(\Phi,A)$ through the expansion \eqref{eq:potential in vortex gauge}.

Given $\mathcal{A}$, a gauge potential for a flat $Lie\left(\mathbb{H}^{1}_{\lambda}\right)$ connection on $\mathbb{H}^{1}_{\lambda_{0}}$ of the form \eqref{eq:potential in vortex gauge} which satisfies
\begin{equation}
\int_{\gamma}\mathcal{A}=2\pi n t_{0}, \label{eq:holonomy condition}
\end{equation}
for $n\in \Z$, it can be trivialised as $U^{-1}dU$ for a bundle map $U:\mathbb{H}^{1}_{\lambda_{0}}\to \mathbb{H}^{1}_{\lambda}$ which covers a holomorphic map $f: M_{0}\to M$. Without loss of generality $U$ can be taken to have the form
\begin{equation}
U:(z^{1},z^{2})\mapsto \frac{1}{\sqrt{|F_{1}|^{2}-\lambda |F_{2}|^{2}}}\begin{pmatrix}
F_{1}&\lambda \bar{F}_{2}\\
F_{2}& \bar{F}_{1}
\end{pmatrix} \label{eq:U parameterisation}
\end{equation}
with $F_{i}:\mathbb{H}^{1}_{\lambda_{0}}\to \C$ where $|F_{1}|^{2}>\lambda|F_{2}|^{2}$.

The vortex configuration can be extracted from the bundle map as
\begin{equation}
\Phi \sigma=U^{*}\tau, \qquad A=U^{*}\tau^{0}-\lf{0}. \label{eq:Higgs and connection from geometry}
\end{equation}
\end{thm}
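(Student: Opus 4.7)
The plan is to split the statement into three assertions and handle them in turn: (i) the curvature of $\mathcal{A}$ vanishes exactly when the vortex equations hold; (ii) a flat connection with the coefficient conditions \eqref{eq:coefficient conditions} is forced to have the form \eqref{eq:potential in vortex gauge}, so it comes from a vortex configuration via (i); and (iii) a flat connection of that form can be trivialised by a bundle map whose parametrisation and extraction formulas are as claimed.

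For the forward direction I would write $\mathcal{A} = (A+\sigma^0)t_0 + \tfrac{1}{2}\Phi\sigma\, t_- + \tfrac{1}{2}\bar\Phi\bar\sigma\, t_+$ and compute $F_{\mathcal{A}} = d\mathcal{A} + \tfrac{1}{2}[\mathcal{A},\mathcal{A}]$ term by term in the basis $\{t_0, t_-, t_+\}$. The differentials use the structure equations on $\mathbb{H}^1_{\lambda_0}$, namely $d\sigma = i\sigma\wedge\sigma^0$ and $d\sigma^0 = \tfrac{i\lambda_0}{2}\sigma\wedge\bar\sigma$, and the bracket terms use $[t_0,t_\pm]=\mp i t_\pm$ and $[t_+,t_-]=2i\lambda t_0$. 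After collecting, the $t_-$ coefficient simplifies to $\tfrac{1}{2}(d\Phi + iA\Phi)\wedge\sigma$ (the $\sigma^0\wedge\Phi\sigma$ pieces from $d\sigma$ and from $[t_0,t_-]$ cancel), the $t_+$ coefficient is its conjugate, and the $t_0$ coefficient is $dA + \tfrac{i}{2}(\lambda_0 - \lambda|\Phi|^2)\sigma\wedge\bar\sigma$. Setting these to zero reproduces \eqref{eq:3D vortex equations} and proves (i).

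For the converse, I would use the dual basis $\{\sigma^0,\, \sigma/2,\, \bar\sigma/2\}$ to $\{X_0, X_-, X_+\}$ (from $\sigma^0(X_0)=1$ and $\sigma(X_-)=\bar\sigma(X_+)=2$) to expand $\mathcal{A}$ as $\mathcal{A}(X_0)\sigma^0 + \tfrac{1}{2}\mathcal{A}(X_-)\sigma + \tfrac{1}{2}\mathcal{A}(X_+)\bar\sigma$. Reality of $\mathcal{A}$ in $\mathrm{Lie}(\mathbb{H}^1_\lambda)$ forces $\mathcal{A}(X_+)=\overline{\mathcal{A}(X_-)}$, so the assumption \eqref{eq:coefficient conditions} gives the coefficient of $t_0$ as $p\sigma^0 + \tfrac{1}{2}(\alpha\sigma+\bar\alpha\bar\sigma)$ and the $t_\pm$ coefficients as $\tfrac{1}{2}\Phi\sigma$ and $\tfrac{1}{2}\bar\Phi\bar\sigma$; defining $A := p\sigma^0 + \tfrac{1}{2}(\alpha\sigma+\bar\alpha\bar\sigma) - \sigma^0$ puts $\mathcal{A}$ in the form \eqref{eq:potential in vortex gauge}, so (i) finishes the job.

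For (iii), flatness of $\mathcal{A}$ together with simple connectedness of the universal cover of $\mathbb{H}^1_{\lambda_0}$ yields a primitive $\widetilde U:\widetilde{\mathbb{H}^1_{\lambda_0}}\to \mathbb{H}^1_\lambda$ with $\mathcal{A}=\widetilde U^{-1}d\widetilde U$ (standard: integrate $\mathcal{A}$ along paths using the path-ordered exponential). This descends to a map $U$ on $\mathbb{H}^1_{\lambda_0}$ precisely when the holonomy around the generator $\gamma$ is trivial in $\mathbb{H}^1_\lambda$; since $\mathcal{A}$ restricted to $\gamma$ is valued in the abelian subalgebra $\mathbb{R} t_0$, path-ordering drops out and the holonomy is $\exp(2\pi n\, t_0)$, which lies in the centre and is the obstruction the hypothesis \eqref{eq:holonomy condition} addresses. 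Since $U$ takes values in $\mathbb{H}^1_\lambda\subset \mathrm{Mat}_{2\times 2}(\mathbb{C})$ and the matrix realisation fixes the form of group elements, writing the entries as $F_1/\sqrt{|F_1|^2-\lambda|F_2|^2}$ and $F_2/\sqrt{|F_1|^2-\lambda|F_2|^2}$ gives \eqref{eq:U parameterisation}. Finally, expanding $U^{-1}dU = U^*\tau^0\, t_0 + \tfrac{1}{2}U^*\tau\, t_- + \tfrac{1}{2}U^*\bar\tau\, t_+$ and matching against \eqref{eq:potential in vortex gauge} yields \eqref{eq:Higgs and connection from geometry}. The induced map $f=\pi\circ U\circ s$ between the bases is holomorphic because the first vortex equation $(d\Phi + iA\Phi)\wedge\sigma =0$ says $U^*\tau = \Phi\sigma$ has no $\bar\sigma$-component modulo the $t_0$-direction, so $f^*e\propto e_0$ on $M_0$.

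The bookkeeping in step (iii) — constructing $U$ from a path-ordered exponential, verifying descent under the centre-valued holonomy, and checking equivariance along the $X_0$-fibres so that $U$ really is a bundle map inducing a base map $f$ — is the main technical obstacle; the curvature calculation and the expansion argument are essentially algebraic and follow mechanically once the commutators and structure equations are in place.
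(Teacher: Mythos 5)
Your parts (i) and (ii) coincide with the paper's own proof: the same curvature computation in the basis $\{t_{0},t_{\pm}\}$ using $d\sigma=i\sigma\wedge\sigma^{0}$, $d\sigma^{0}=\tfrac{i\lambda_{0}}{2}\sigma\wedge\bar\sigma$ and the commutators, and the same dual-basis expansion giving $A=(p-1)\sigma^{0}+\tfrac{1}{2}(\alpha\sigma+\bar\alpha\bar\sigma)$. Your universal-cover framing of the trivialisation is equivalent to the paper's argument via path-independence of the path-ordered exponential and the non-abelian Stokes theorem, and your observation that $\mathcal{A}|_{\gamma}$ is $\R t_{0}$-valued so that path ordering drops out is exactly the paper's use of \eqref{eq:coefficient conditions}. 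One caution there: centrality of the holonomy is not the relevant property --- $\widetilde U$ descends only if the holonomy is the \emph{identity}, which is precisely what \eqref{eq:holonomy condition} is imposed to guarantee.

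The genuine gap is in the final step, which you defer as ``bookkeeping'' but which is where the paper does its real work. First, the bundle-map property must be derived, not assumed: $\mathcal{A}(X_{0})=pt_{0}$ reads $X_{0}U=pUt_{0}$, and in the parameterisation \eqref{eq:U parameterisation} this gives $X_{0}(F_{2}/F_{1})=0$, i.e.\ $\pi\circ U=F_{2}/F_{1}$ has equivariant degree zero; only then is $f=s^{*}(F_{2}/F_{1})$ well defined on $M_{0}$, and only then does your pullback argument for holomorphy make sense. Second, your holomorphy argument is incomplete exactly where the theorem is delicate: the paper computes
\begin{equation*}
U^{*}\tau(X_{+})=\frac{2i}{|F_{1}|^{2}-\lambda|F_{2}|^{2}}\left(F_{1}X_{+}F_{2}-F_{2}X_{+}F_{1}\right)=\frac{2i}{|F_{1}|^{2}-\lambda|F_{2}|^{2}}\,F_{1}^{2}\,X_{+}\!\left(\frac{F_{2}}{F_{1}}\right),
\end{equation*}
where the last equality, and hence the equivalence of $U^{*}\tau(X_{+})=0$ with holomorphy of $f$ (via the diagram \eqref{cd:X plus holomorphicity general}, which identifies $X_{+}$ with the lift of a $\bar\partial$-type operator), holds only away from the zeros of $F_{1}$. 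For $\lambda\neq -1$ the constraint $|F_{1}|^{2}>\lambda|F_{2}|^{2}$ forces $F_{1}\neq 0$ and your argument closes; for $\lambda=-1$ zeros of $F_{1}$ genuinely occur, $f$ acquires poles there, and the correct conclusion is that $f$ is a holomorphic map $M_{0}\to S^{2}$ --- a caveat your sketch of ``$f^{*}e\propto e_{0}$'' misses, since the coframe $e$ itself is only local on $S^{2}$ and is singular at the preimages of its coordinate singularity.
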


This result generalises Theorem 3.2 from \cite{RS1} and Theorem 3.2 from \cite{RS2}, the result when $\lambda_{0}=\lambda=-1$ or $1$ respectively. 
Note that when $\lambda\neq -1$ we can assume that $F_{1}\neq 0$ since $|F_{1}|^{2}>\lambda|F_{2}|^{2}$. However when $\lambda=-1$ this assumption is not valid. This relates to the fact that when $\lambda=-1$ the map $f=s^{*}\left(\frac{F_{2}}{F_{1}}\right)$, which $U$ covers, can have poles since it is a map $f:M_{0}\to S^{2}$. 

\begin{proof}
Given a  $\mathbb{H}^{1}_{\lambda}$ connection on $\mathbb{H}^{1}_{\lambda_{0}}$ in the vortex gauge, \eqref{eq:potential in vortex gauge}, the flatness condition $d\mathcal{A}+\mathcal{A}\wedge \mathcal{A}=0$ is equivalent to
\begin{equation}
\left(d\left(\Phi\sigma\right)+i\left(A+\sigma^{0}\right)\Phi\right)=0\qquad dA=\frac{i}{2}\left(\lambda_{0}-\lambda|\Phi|^{2}\right)\bar{\sigma}\wedge\sigma.
\end{equation}
Using equation \eqref{eq:3d structure equations} these are seen to be equivalent to the vortex equations, \eqref{eq:3D vortex equations}.

For the converse expand the flat $\mathbb{H}^{1}_{\lambda}$ connection, $\mathcal{A}$, on $\mathbb{H}^{1}_{\lambda_{0}}$  in terms of the generators, $t_{0}, t_{+}, t_{-}$. The coefficients are linear combinations of $\lf{0}, \sigma$ and $ \bar{\sigma}$, as they form a basis of the cotangent space of $\mathbb{H}^{1}_{\lambda_{0}}$. Imposing the conditions in Equation \eqref{eq:coefficient conditions} leads to the gauge potential $\mathcal{A}$ being in the vortex form, Equation \eqref{eq:potential in vortex gauge}, with Higgs field $\Phi$, and abelian gauge potential
\begin{equation}
A=\left(p-1\right)\lf{0}+\frac{1}{2}\left(\alpha \sigma+\bar{\alpha}\bar{\sigma}\right).
\end{equation}
The same calculation as above then gives that the vortex equations, \eqref{eq:3D vortex equations}, are satisfied.

To globally trivialise a flat connection $\mathcal{A}$ on $\mathbb{H}^{1}_{\lambda_{0}}$ in terms of $U:\mathbb{H}^{1}_{\lambda_{0}}\to \mathbb{H}^{1}_{\lambda}$ as $\mathcal{A}=U^{-1}dU$, its path-ordered exponential must be path independent. If this is the case $U$ can be constructed explicitly from $\mathcal{P} \exp\left(\int_{\tilde{\gamma}}\mathcal{A}\right)$ along any path $\tilde{\gamma}$, starting at a fixed but arbitrary base point, \cite{BM}.

As $\mathcal{A}$ is a flat connection the non-abelian Stokes theorem implies that the path-ordered exponential is path independent for contractible paths. The conditions in \eqref{eq:coefficient conditions} ensure that the path-ordered exponential of $\mathcal{A}$ along $\gamma$ coincides with the exponential of the ordinary integral. Then \eqref{eq:holonomy condition} implies that
\begin{equation}
\mathcal{P} \exp\left(\int_{\gamma}\mathcal{A}\right)=\mathbb{I}.
\end{equation}

Flatness of the connection $\mathcal{A}$ and the non-abelian Stokes theorem then combine to give that the path-ordered exponential of $\mathcal{A}$ along any closed curve in $\mathbb{H}^{1}_{\lambda_{0}}$ is the identity. This gives the path independence of the path-ordered exponential.

The final part of the proof is to show that for $\mathcal{A}=U^{-1}dU$ satisfying \eqref{eq:coefficient conditions} that $U$ is a bundle map covering a holomorphic function $M_{0}\to M$. The first condition in \eqref{eq:coefficient conditions} becomes
\begin{equation}
X_{0}U=p U t_{0}, \label{eq:U bundle map condition}
\end{equation}
with $p:\mathbb{H}^{1}_{\lambda_{0}}\to \R$. This is just the infinitesimal statement that $U$ maps the fibres of $\mathbb{H}^{1}_{\lambda_{0}}\to M_{0}$ to the fibres of $\mathbb{H}^{1}_{\lambda}\to M$, in other words that $U$ is a bundle map.

Complex conjugation of the second condition in \eqref{eq:coefficient conditions} implies that 
\begin{equation}
U^{-1}X_{+}U=\bar{\alpha}t_{0}+\bar{\Phi}t_{+}. \label{eq:conjugation of coeff condition}
\end{equation}
Now apply 
\begin{equation}
U^{-1}dU=U^{*}\tau^{0}t_{0}+\frac{1}{2}U^{*}\tau t_{-}+\frac{1}{2}U^{*}\bar{\tau}t_{+}
\end{equation} to $X_{+}$, the condition in \eqref{eq:conjugation of coeff condition} is thus equivalent to 
\begin{equation}
U^{*}\tau(X_{+})=0. \label{eq:U covers holomorphic}
\end{equation}
We now need to show that this is equivalent to $U$ covering a holomorphic map.

Using the parameterisation of $U$ in terms of the functions $F_{i}$ defined in \eqref{eq:U parameterisation} we see that Equation \eqref{eq:U bundle map condition} becomes
\begin{equation}
X_{0}\left(\frac{F_{i}}{\sqrt{|F_{1}|^{2}-\lambda |F_{2}|^{2}}}\right)=\frac{i}{2}p\left(\frac{F_{i}}{\sqrt{|F_{1}|^{2}-\lambda |F_{2}|^{2}}}\right).
\end{equation}

From this it follows that the map $\pi\circ U=\frac{F_{2}}{F_{1}}$ has equivariant degree zero, from Definition \ref{equivariant function definition}, and that $U$ covers 
\begin{equation}
f=s^{*}\left(\frac{F_{2}}{F_{1}}\right):M_{0}\to M.
\end{equation}

Using \eqref{cd:X plus holomorphicity general} for $\frac{F_{2}}{F_{1}}$ we find that $f$ being holomorphic is equivalent to 
\begin{equation}
X_{+}\left(\frac{F_{2}}{F_{1}}\right)=0.
\end{equation}
Returning to \eqref{eq:U covers holomorphic} use \eqref{eq:left invariant frame} to see that
\begin{equation}
U^{*}\tau(X_{+})=\frac{2i}{|F_{1}|^{2}-\lambda|F_{2}|^{2}}\left(F_{1}X_{+}F_{2}-F_{2}X_{+}F_{1}\right)=\frac{2i}{|F_{1}|^{2}-\lambda|F_{2}|^{2}}F_{1}^{2}X_{+}\frac{F_{2}}{F_{1}},
\end{equation}
with the last equality holding away from the zeros of $F_{1}$. Thus the condition \eqref{eq:U covers holomorphic} is equivalent to $f=s^{*}\left(\frac{F_{2}}{F_{1}}\right)$ being holomorphic away from the zeros of $F_{1}$. This means that for $\lambda\neq -1$ the result has been established. For the $\lambda=-1$ case $f$ will have poles at the zeros of $F_{1}$ and is a holomorphic map $M_{0}\to S^{2}$. 
\end{proof}

At the level of the bundle map $U$ the $U(1)$ gauge invariance from \eqref{eq:vortex config gauge transform} becomes
\begin{equation}
U\mapsto \tilde{U}=Ue^{\beta t_{0}}, \qquad \beta\in C^{\infty}\left(\mathbb{H}^{1}_{\lambda_{0}}\right).
\end{equation}
This defines a new trivialisation with the same $f$ as $U$. The connection $\tilde{U}^{-1}d\tilde{U}$ differs from $\mathcal{A}=U^{-1}dU$ by the gauge transformation in \eqref{eq:vortex config gauge transform}.

Notice that when $\lambda\neq -1$ a vortex configuration can always be constructed from a given holomorphic map $f:M_{0}\to M$ by choosing
\begin{equation}
F_{1}(z_{1},z_{2})=1,\qquad F_{2}(z_{1},z_{2})=f\left(\frac{z_{2}}{z_{1}}\right). \label{eq:trivial lift}
\end{equation}
This trivial lift results in a connection $\mathcal{A}$ that is constant along the fibres since $\mathcal{A}(X_{0})=0$ which implies $\mathcal{L}_{X_{0}}\mathcal{A}=0$. This trivial lift is a direct consequence of $\mathbb{H}^{1}_{1}$ and $\mathbb{H}^{1}_{0}$ being trivial bundles. The trivial lift was observed for the $\lambda=\lambda_{0}=1$ case in \cite{RS2}.

\subsection{Vortex configurations from vortices}
To construct vortex configurations from a non-trivial lift we follow the work of \cite{RS1,RS2} and lift vortices from $M_{0}$ to $\mathbb{H}^{1}_{\lambda_{0}}$. The idea behind this is that a vortex is given by a rational function $f=f_{2}/f_{1}$. We can then take the lift  $f_{2}/f_{1}=F_{2}/F_{1}$. This lift is non-trivial since the functions $F_{1},F_{2}$ have a non-trivial equivariant degree.

For this lift the Higgs field $\Phi$ has equivariant degree $2N-2$ in the sense of Definition~\ref{equivariant function definition}, with $N$ the integer equivariant degree of $F_{1}$ and $F_{2}$ determined from the rational function $f$. The specific vortex number depends on the type of vortex. Lifting Popov and hyperbolic vortices has been  carried out previously in \cite{RS1} and \cite{RS2} respectively. 

This leads us to the following Corollary of Theorem~\ref{vortex config to bundle mp and back}.
\begin{cor}
\label{cor:vortex config finite degree}
Let $U:\mathbb{H}^{1}_{\lambda_{0}}\to \mathbb{H}^{1}_{\lambda}$ be the bundle map from \eqref{eq:U parameterisation} with 
\begin{equation}
2iX_{0}F_{1}=2iX_{0}F_{2}=N\in\Z.
\end{equation}
The vortex configuration $(\Phi,A)$ constructed from the connection $\mathcal{A}=U^{-1}dU$ through Theorem \ref{vortex config to bundle mp and back} has a gauge field which satisfies the normalisation condition
\begin{equation}
A(X_{0})=N-1,
\end{equation}
and a Higgs field of equivariant degree $2N-2$. In terms of $F_{1},F_{2}$ the vortex configuration is expressed as
\begin{equation}
\Phi=\frac{F_{1}\partial_{2}F_{2}-F_{2}\partial_{2}F_{1}}{z_{1}\left(|F_{1}|^{2}+|F_{2}|^{2}\right)},
\end{equation}
and 
\begin{equation}
A=\left(N-1\right)\lf{0}+\frac{i}{2}X_{-}\ln D^{2} \sigma -\frac{i}{2}X_{+}\ln D^{2} \bar{\sigma}.
\end{equation}
with $D^{2}=|F_{1}|^{2}+|F_{2}|^{2}$.
\end{cor}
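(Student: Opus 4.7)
The plan is to apply Theorem~\ref{vortex config to bundle mp and back} directly to the bundle map $U$ in \eqref{eq:U parameterisation} and extract the vortex configuration from the two relations $\Phi\sigma = U^{*}\tau$ and $A = U^{*}\tau^{0} - \sigma^{0}$ of \eqref{eq:Higgs and connection from geometry}. I will write $D^{2} = |F_{1}|^{2} - \lambda|F_{2}|^{2}$ for the quantity appearing in the normalising factor of \eqref{eq:U parameterisation}. The equivariance hypothesis $2iX_{0}F_{i} = NF_{i}$, together with the holomorphicity of the $F_{i}$ as polynomials on $\mathbb{C}^{2}$ which is implicit in the construction, gives the Euler identity $z_{1}\partial_{1}F_{i} + z_{2}\partial_{2}F_{i} = NF_{i}$ and, via the combination $X_{0}(F_{i}\bar{F}_{i}) = 0$, the crucial fact $X_{0}D^{2} = 0$.

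Because $X_{0}$ passes through the normalising factor, a direct calculation on the matrix in \eqref{eq:U parameterisation} gives $X_{0}U = N U t_{0}$. In the notation of Theorem~\ref{vortex config to bundle mp and back} this reads $p = N$, and the expansion $A = (p-1)\sigma^{0} + \tfrac{1}{2}(\alpha\sigma + \bar{\alpha}\bar{\sigma})$ yields $A(X_{0}) = N-1$. For the Higgs field, contracting $\Phi\sigma = U^{*}\tau$ with $X_{-}$ and using $\sigma(X_{-}) = 2$ reduces the task to evaluating $2\Phi = (U^{*}\tau)(X_{-})$. The analogue on $\mathbb{H}^{1}_{\lambda}$ of \eqref{eq:left invariant frame} gives $\tau = 2i(w_{1}dw_{2} - w_{2}dw_{1})$ with $w_{i} = F_{i}/D$; the $D$ factors cancel, leaving $U^{*}\tau = 2i(F_{1}dF_{2} - F_{2}dF_{1})/D^{2}$. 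Applying $X_{-} = -i(\bar{z}_{1}\partial_{2} + \lambda\bar{z}_{2}\partial_{1})$ and using the Euler identity to replace $\partial_{1}$ derivatives with $\partial_{2}$ derivatives, the bracket $[\bar{z}_{1} - \lambda|z_{2}|^{2}/z_{1}]$ that emerges collapses to $1/z_{1}$ via the defining relation $|z_{1}|^{2} - \lambda|z_{2}|^{2} = 1$ of $\mathbb{H}^{1}_{\lambda_{0}}$, producing the stated formula for $\Phi$.

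For the connection I would expand $U^{*}\tau^{0}$ using the analogue of \eqref{eq:left invariant frame}. The contributions proportional to $dD/D$ cancel between terms and their complex conjugates since $D$ is real, leaving $U^{*}\tau^{0} = i(\bar{F}_{1}dF_{1} - \lambda\bar{F}_{2}dF_{2} - \text{c.c.})/D^{2}$. Holomorphicity of the $F_{i}$ implies $X_{+}F_{i} = 0$ and $X_{-}\bar{F}_{i} = 0$, so the contractions simplify dramatically: $(U^{*}\tau^{0})(X_{-}) = iX_{-}D^{2}/D^{2} = iX_{-}\ln D^{2}$, and similarly $(U^{*}\tau^{0})(X_{+}) = -iX_{+}\ln D^{2}$. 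Writing $A$ in the basis dual to $\{X_{0}, \tfrac{1}{2}X_{-}, \tfrac{1}{2}X_{+}\}$ and using $\sigma(X_{+}) = \bar{\sigma}(X_{-}) = 0$ reproduces the claimed form. The equivariant degree $2N-2$ of $\Phi$ follows from applying $2iX_{0}$ to the explicit formula: $X_{0}D^{2} = 0$, $2iX_{0}z_{1} = z_{1}$, and the numerator $F_{1}\partial_{2}F_{2} - F_{2}\partial_{2}F_{1}$ is homogeneous of degree $2N-1$ in $(z_{1},z_{2})$, so the net degree is $(2N-1) - 1 = 2N-2$. I expect the only real obstacle to be the bookkeeping in the Higgs calculation: the Euler identity and the surface constraint must be combined correctly to collapse several apparently independent terms into the single denominator $z_{1}$, whereas the connection computation is rendered almost immediate by the vanishing identities $X_{+}F_{i} = X_{-}\bar{F}_{i} = 0$.
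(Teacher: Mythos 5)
Your proposal is correct and takes essentially the same route the paper intends: the corollary is stated without a separate proof precisely because it is the direct computation you carry out, namely applying Theorem~\ref{vortex config to bundle mp and back} to the bundle map $U$ of \eqref{eq:U parameterisation} and reading off $\Phi$ and $A$ from $U^{*}\tau$ and $U^{*}\tau^{0}$ via \eqref{eq:Higgs and connection from geometry}, with $X_{0}U=NUt_{0}$ giving $p=N$ and hence $A(X_{0})=N-1$, and with holomorphicity of the $F_{i}$ ($X_{+}F_{i}=X_{-}\bar{F}_{i}=0$) collapsing the contractions exactly as you describe. Two notational points to tidy: in the source-side formulas the parameter in $X_{-}=-i\left(\bar{z}_{1}\partial_{2}+\lambda_{0}\bar{z}_{2}\partial_{1}\right)$ and in the constraint $|z_{1}|^{2}-\lambda_{0}|z_{2}|^{2}=1$ must be $\lambda_{0}$ rather than $\lambda$ (you name $\mathbb{H}^{1}_{\lambda_{0}}$ correctly but write $\lambda$), and your general denominator $|F_{1}|^{2}-\lambda|F_{2}|^{2}$ agrees with the corollary's $|F_{1}|^{2}+|F_{2}|^{2}$ only in the $\lambda=-1$ case, which is the case covered by both of the paper's examples.
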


To make this result more understandable we give two example; one for Jackiw-Pi vortices, and another for Ambj\o rn-Olesen vortices.

\subsubsection{Jackiw-Pi vortices}
In \cite{HY} it was shown that the Jackiw-Pi vortex equations on $\R^{2}$ with a finite number of zeros are solved by a rational map $f=\frac{p}{q}:\R^{2} \to S^{2}$ with $\text{deg}(p)<\text{deg}(q)$. For example a $2N$-vortex solution is given by
\begin{equation}
p(z)=\sum_{i=0}^{M}a_{i}z^{i}, \qquad q(z)=\sum_{i=0}^{N}b_{i}z^{i}, \quad M<N,
\end{equation}
with the understanding that $p$ and $q$ have no common factors, at least one of $a_{0},b_{0}$ are non-zero, and  $b_{N}\neq 0$. In this case we can write down the following homogeneous polynomials 
\begin{equation}
P(z_{1},z_{2})=\sum_{i=0}^{M}a_{i}z_{1}^{N-i}z_{2}^{i}, \qquad Q(z_{1},z_{2})=\sum_{i=0}^{N}b_{i}z_{1}^{N-i}z_{2}^{i}, \label{eq:JP vortex polynomials}
\end{equation}
which satisfy 
\begin{equation}
s^{*}\left(\frac{P}{Q}\right)=\frac{p}{q}.
\end{equation}

Examples of Jackiw-Pi vortices with $N=1$ and $N=2$, including plots of $|\phi|^{2}$, are given in \cite{Horvathy1, HZ}.

In Corollary \ref{cor:vortex config finite degree} taking
\begin{equation}
F_{1}(z_{1},z_{2})=Q(z_{1},z_{2}), \qquad F_{2}(z_{1},z_{2})=P(z_{1},z_{2}),
\end{equation}
with $P,Q$ given in \eqref{eq:JP vortex polynomials} so that $2iX_{0}P=2iX_{0}Q=N$, defines a $\lambda_{0}=0,\lambda=-1$ vortex configuration.

In \cite{Manton2} the case of Jackiw-Pi vortices on the torus is discussed, there the map $f$ is a doubly periodic elliptic function. As a vortex on the torus it has a finite vortex number, $2N$ where $N$ is the number of poles of $f$. However, as a vortex on $\R^{2}$ it has an infinite number of zeros. The torus is obtained from $\R^{2}$ by quotienting with discrete subgroup of $SE_{2}$ and demanding that the zeros of the Higgs field on $\R^{2}$ are periodic under this subgroup and there are $2N$ of them in the principal domain. The only way to lift these vortices seems to be via the trivial lift \eqref{eq:trivial lift}. 

The most popular example of a Jackiw-Pi vortex on $\R^{2}$ \cite{HY, Horvathy1, HZ} is the axially symmetric case constructed from the rational function
\begin{equation}
f=\frac{1}{z^{N}}.
\end{equation}
For this choice of $f$ the $F_{i}$ are given by
\begin{equation}
F_{1}=z_{2}^{N},\qquad F_{2}=z_{1}^{N}.
\end{equation}
For this vortex the Higgs field of the vortex configuration is given by 
\begin{equation}
\Phi=-N\frac{z_{1}^{N-1}z_{2}^{N-1}}{|z_{1}|^{2N}+|z_{2}|^{2N}}.
\end{equation}
This can be explicitly seen to satisfy $2iX_{0}\Phi=(2N-2)\Phi$ and thus $\Phi$ is a degree $2N-2$ equivariant function.

\subsubsection{Ambj\o rn-Olesen vortices}
From \cite{Manton2} we know that  Ambjorn-Olsen vortices are constructed from a holomorphic map $f:H^{2}\to S^{2}$, subject to $|f(z)|\to 1$ as $|z|\to 1$. These maps can be expressed in terms of their $m$ zeros, $c_{1},\cdots , c_{m}$,  and $n$ poles, $d_{1},\cdots, d_{n}$,  as 
\begin{equation}
f(z)=\frac{f_{2}}{f_{1}}=\prod_{i=1}^{m} \left(\frac{z-c_{i}}{1-\bar{c}_{i}z}\right)\prod_{j=1}^{n} \left(\frac{1-\bar{d}_{j}z}{z-d_{j}}\right). \label{eq:AO map}
\end{equation}
To see this we use that the zeros and poles of $f$ define the Blaschke products
\begin{equation}
f_{2}=\prod_{i=1}^{m} \left(\frac{z-c_{i}}{1-\bar{c}_{i}z}\right), \quad f_{1}=\prod_{j=1}^{n} \left(\frac{z-d_{j}}{1-\bar{d}_{j}z}\right).
\end{equation}
The ratio of these Blaschke products $\frac{f_{2}}{f_{1}}$ has the same zeros and poles as $f$ and their ratio $\frac{f f_{1}}{f_{2}}$ is a holomorphic function with no zeros and no poles satisfying $|f(z)|=1$ for $|z|=1$. Liouville's theorem then gives that this ratio is a constant, $\mu \in \C$ such that $|\mu|=1$, multiplying $f$ by a constant does not change the vortex that we construct from $f$ so we can take $\mu=1$.

To make use of Corollary \ref{cor:vortex config finite degree} to construct a $\lambda_{0}=-\lambda=1$ vortex configuration take
\begin{equation}
F_{1}(z_{1},z_{2})=\prod_{i=1}^{n}\prod_{j=1}^{m}\left(z_{1}-\bar{c}_{i}z_{2}\right)\left(z_{2}-d_{j}z_{1}\right), \quad F_{2}(z_{1},z_{2})=\prod_{i=1}^{n}\prod_{j=1}^{m}\left(z_{2}-c_{i}z_{1}\right)\left(z_{1}-\bar{d}_{j}z_{2}\right). \label{eq:AO explicit maps}
\end{equation}
The equivariant degree is $N=m+n$.

The same procedure can be used to construct vortex configurations from a given Bradlow vortex.

\subsubsection{Lifts at the level of the connection}
Working at the level of the flat connections $\mathcal{A}$ and $f^{*}\hat{A}$ we can state the relationship between  vortex connections and vortices. To do this recall that if $\lambda=1$ then $f^{*}\hat{A}$ has singularities at the points $q_{j}$ such that $f_{1}(q_{j})=0$, and is thus defined on the space $P\subset \mathbb{H}^{1}_{\lambda_{0}}$ defined in Eq.~\eqref{eq:total space with singularities}. 

Note that for a function $g:M_{0}\to \C$ we can define the map
\begin{equation}
r_{g}:M_{\lambda_{0}}\backslash \{q_{i}\}\to \mathbb{H}^{1}_{\lambda}, \quad r_{g}=\begin{pmatrix}
\frac{\bar{g}}{|g|}& 0\\
0& \frac{g}{|g|}
\end{pmatrix},
\end{equation}
where the $q_{i}$ are the zeros of $g$.

Now using the section $s$ defined in \eqref{eq:bundle section}  we get the following corollary of Theorem \ref{vortex config to bundle mp and back} and Proposition \ref{prop:surface connection from MC}.
\begin{cor}\label{cor:vortex connection from vortex config}
For the bundle map $U$ in \eqref{eq:U parameterisation} covering the holomorphic map $f$, the gauge vortex connection $f^{*}\hat{A}$ from \eqref{eq:Vortex flat connection 2d} is related to $\mathcal{A}=U^{-1}dU$ through the, possibly singular, gauge transformation $r_{f_{1}}$, where $f_{1}=F_{1}\circ s$:
\begin{equation}
f^{*}\hat{A}=r_{f_{1}}^{-1}s^{*}\mathcal{A} r_{f_{1}}+r_{f_{1}}^{-1}dr_{f_{1}}.
\end{equation}
\end{cor}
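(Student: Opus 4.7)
The plan is to relate the two flat $\mathrm{Lie}(\mathbb{H}^{1}_{\lambda})$ connections on $M_{0}$ by realising both as pullbacks of the Maurer--Cartan one-form on $\mathbb{H}^{1}_{\lambda}$ along two different lifts of $f$, and then to identify the $U(1)$-valued ``phase'' relating those lifts with $r_{f_{1}}$.

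First, I would invoke Proposition~\ref{prop:surface connection from MC}: since $\hat{A}=s^{*}(h^{-1}dh)$, applying $f^{*}$ and using functoriality of pullback gives $f^{*}\hat{A}=(s\circ f)^{*}(h^{-1}dh)=(s\circ f)^{-1}\,d(s\circ f)$. On the other side, from $\mathcal{A}=U^{-1}dU$ we similarly obtain $s^{*}\mathcal{A}=(U\circ s)^{-1}d(U\circ s)$. So both sides of the corollary are Maurer--Cartan pullbacks by maps $M_{0}\setminus\{q_{i}\}\to\mathbb{H}^{1}_{\lambda}$, and the task reduces to comparing the two lifts $s\circ f$ and $U\circ s$ of $f:M_{0}\to M$.

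The second step is the key algebraic identification. Since both $s\circ f$ and $U\circ s$ project under $\pi$ to $f$, they differ by right multiplication by an element of the fibre $U(1)=\exp(\R\,t_{0})$. I would compute this element explicitly. Writing $f=f_{2}/f_{1}$ with $f_{i}=F_{i}\circ s$, the formula \eqref{eq:bundle section} gives
\begin{equation*}
s(f(z))=\frac{1}{\sqrt{1-\lambda|f|^{2}}}\begin{pmatrix}1 & \lambda\bar{f}_{2}/\bar{f}_{1}\\ f_{2}/f_{1} & 1\end{pmatrix}=\frac{1}{\sqrt{|f_{1}|^{2}-\lambda|f_{2}|^{2}}}\begin{pmatrix}f_{1} & \lambda\bar{f}_{2}\\ f_{2} & \bar{f}_{1}\end{pmatrix}\begin{pmatrix}\bar{f}_{1}/|f_{1}| & 0\\ 0 & f_{1}/|f_{1}|\end{pmatrix},
\end{equation*}
where I have pulled out a diagonal factor so as to exhibit the matrix $U\circ s$ as given by \eqref{eq:U parameterisation} evaluated on $s$. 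The leftover diagonal factor is precisely $r_{f_{1}}$, yielding the clean identity $s\circ f=(U\circ s)\,r_{f_{1}}$, valid away from the zeros $q_{i}$ of $f_{1}$ where $r_{f_{1}}$ becomes singular.

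Finally, I would apply the standard product rule for a group-valued product $g_{1}=g_{2}h$, namely $g_{1}^{-1}dg_{1}=h^{-1}(g_{2}^{-1}dg_{2})h+h^{-1}dh$, with $g_{1}=s\circ f$, $g_{2}=U\circ s$, and $h=r_{f_{1}}$. Substituting the expressions from the first step recovers
\begin{equation*}
f^{*}\hat{A}=r_{f_{1}}^{-1}\,s^{*}\mathcal{A}\,r_{f_{1}}+r_{f_{1}}^{-1}dr_{f_{1}},
\end{equation*}
as claimed. The main conceptual point, rather than an obstacle, is the second step: recognising that the specific diagonal $U(1)$ ``phase'' relating the two lifts is globally captured by $r_{f_{1}}$, including the singularities at zeros of $f_{1}$ that account for the difference between $P$ and $\mathbb{H}^{1}_{\lambda_{0}}$ in \eqref{eq:total space with singularities}. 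Everything else is a mechanical consequence of the Maurer--Cartan functoriality already used in Proposition~\ref{prop:surface connection from MC}.
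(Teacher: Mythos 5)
Your proof is correct and follows exactly the route the paper intends: the corollary is stated there without a written proof, as a direct consequence of Proposition~\ref{prop:surface connection from MC} ($\hat{A}=s^{*}(h^{-1}dh)$) and Theorem~\ref{vortex config to bundle mp and back} ($\mathcal{A}=U^{-1}dU$), and your argument makes this explicit by writing both sides as Maurer--Cartan pullbacks along the two lifts $s\circ f$ and $U\circ s$ of $f$ and then applying the product rule. Your matrix computation verifying $s\circ f=(U\circ s)\,r_{f_{1}}$, with the singular phase $r_{f_{1}}$ appearing precisely at the zeros of $f_{1}$, is exactly the content the paper leaves implicit, and it checks out.
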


The trivial lift in \eqref{eq:trivial lift} corresponds to $r_{f_{1}}=\mathbb{I}$, $f^{*}\hat{A}=s^{*}\mathcal{A}$. Again, this is only possible when $\lambda\neq 1$ so that $s^{*}\mathcal{A}$ and $f^{*}\hat{A}$ are both manifestly smooth.

There is a global Cartan geometry picture, in the sense of \cite{Sharpe,Wise}, corresponding to the vortex. In \cite{RS1} this is spelt out for the case of Popov vortices. All of the essential details carry forward to the more general case considered here. As with all of our discussion, the $\lambda=-1$ case is the most subtle and since that was dealt with in \cite{RS1} we do not give the details here.

\section{A comment on magnetic modes}
\label{sec:magnetic modes}
In the previous work \cite{RS1,RS2}, vortex configurations on the group manifold, either $SU(2)$ or $SU(1,1)$, were used to construct solutions to a twisted Dirac equation. These vortex magnetic modes were then pulled back to vortex magnetic modes on flat $\R^{3}$ or $\R^{2,1}$. For both Bradlow and Ambj\o rn-Olesen vortices the approach used in \cite{RS2} is applicable and the vortices lead to solutions of a twisted Dirac equation on $\R^{2,1}$.

As the argument follows through \textit{mutandis mutatis} we will not give all the details here but just state the result which generalises Theorem 5.3 from \cite{RS2}. The definition of a vortex zero mode is as follows
\begin{dfn}
A pair $(\Psi,A)$ of a spinor and a one-form on $SU(1,1)$ is called a vortex magnetic mode of the Dirac equation on $SU(1,1)$ if
\begin{equation}
\slashed{D}_{SU(1,1),A}\Psi=0, \qquad F_{A}=-\lambda\frac{4i}{\ell}\star\Psi^{\dagger}h^{-1}dh\Psi-\frac{1}{4}\sigma^{1}\wedge\sigma^{2},
\end{equation}
with $\star$ the Hodge star operator on $SU(1,1)$ with respect to the metric \eqref{eq:metric using left-invariant one-forms} and orientation \eqref{eq:orientation using left-invariant one-forms}.
\end{dfn}

Note that the signature of the metric here is opposite to that used in \cite{RS2}. The signature chosen here follows from the metric \eqref{eq:metric using left-invariant one-forms}, and enables a unified language for the different vortex equations. This changes does not effect the results and only changes a couple of signs in the computations.

\begin{thm}
Given $(\Phi,A)$ a vortex configuration on $SU(1,1)$, the pair
\begin{equation}
\Psi=\begin{pmatrix}
\Phi\\
0
\end{pmatrix}
,\qquad A'=A+\frac{3}{4}\sigma^{0},
\end{equation}
is a vortex magnetic mode on $SU(1,1)$.
\end{thm}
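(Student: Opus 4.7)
The plan is to verify the two defining conditions of a vortex magnetic mode in turn: first the twisted Dirac equation, then the curvature constraint, setting $\lambda_0 = 1$ throughout since the base manifold is $SU(1,1)$.

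For the Dirac equation, I would begin by writing $\slashed{D}_{SU(1,1), A'}$ in the complex left-invariant frame dual to $X_0, X_\pm$, incorporating both the spin connection terms (computed directly from the structure constants via $\omega_{ab} \propto C_{ab}{}^c \lf{c}$) and the gauge shift $A' = A + \frac{3}{4}\lf{0}$. Acting on $\Psi = (\Phi,0)^T$, the Dirac equation splits into two scalar equations. The first, coming from the lower spinor component, reads $(X_+ + iA'(X_+))\Phi = 0$; since $\lf{0}(X_+)=0$, the shift is invisible here and the equation is exactly the pairing of the first vortex equation $(d\Phi + iA\Phi)\wedge\sigma = 0$ with $X_+$. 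The second, from the upper component, combines $X_0 \Phi$, $i A(X_0)\Phi$, the shift $\tfrac{3i}{4}\Phi$, and the constant coming from the spin connection; the equivariance $\mathcal{L}_{X_0}\Phi = -iA(X_0)\Phi$ of \eqref{eq:higgs field eqivariance} together with the explicit spin connection coefficient forces the shift to be $\tfrac{3}{4}$, which is exactly what is prescribed.

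For the curvature equation, I would expand both sides directly. The left-hand side becomes
\begin{equation*}
F_{A'} = F_A + \tfrac{3}{4}\,d\lf{0} = \tfrac{i}{2}(1 - \lambda|\Phi|^2)\bar\sigma \wedge \sigma + \tfrac{3i}{8}\,\sigma \wedge \bar\sigma,
\end{equation*}
using the vortex equation \eqref{eq:3D vortex equations} and the structure equation $d\lf{0} = \tfrac{i}{2}\sigma\wedge\bar\sigma$ at $\lambda_0 = 1$. The right-hand side hinges on the bilinear $\Psi^\dagger h^{-1}dh\,\Psi = |\Phi|^2 \lf{a}(s_a)_{11}$; since only $s_0$ has a nonzero $(1,1)$ entry $-\tfrac{i}{2}$, this collapses to $-\tfrac{i}{2}|\Phi|^2\lf{0}$. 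Using the metric \eqref{eq:metric using left-invariant one-forms} and orientation \eqref{eq:orientation using left-invariant one-forms} one finds $\star\lf{0} = \tfrac{1}{2}\sigma^1 \wedge \sigma^2 = \tfrac{i}{4}\sigma \wedge \bar\sigma$. Plugging this in, the $-\tfrac{1}{4}\sigma^1\wedge\sigma^2$ term in the definition reproduces the $\tfrac{3i}{8}\sigma\wedge\bar\sigma$ piece of the $d\lf{0}$ contribution after accounting for $\sigma\wedge\bar\sigma = -2i\,\sigma^1\wedge\sigma^2$, while the $|\Phi|^2$ term matches with the same value of $\ell$ fixed in \cite{RS2}.

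The main obstacle, as in the parallel argument of \cite{RS2}, is the accurate bookkeeping of spin-connection and Hodge-star conventions on the Lorentzian-signature $SU(1,1)$ (which in this paper has the opposite signature from \cite{RS2}); in particular, the precise numerical coefficient $\tfrac{3}{4}$ in the shift $A'-A$ is pinned down by this spin-connection computation. Once these conventions are fixed, all remaining steps reduce to algebraic manipulation using the structure equations \eqref{eq:3d structure equations} and the vortex equations \eqref{eq:3D vortex equations}, and the theorem follows \emph{mutatis mutandis} from Theorem 5.3 of \cite{RS2}.
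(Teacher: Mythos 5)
Your proposal is correct and takes essentially the same route as the paper: the paper's entire proof consists of noting $F_{A'}=F_{A}+\frac{3}{4}\sigma^{1}\wedge\sigma^{2}=\left(\lambda|\Phi|^{2}-\frac{1}{4}\right)\sigma^{1}\wedge\sigma^{2}$ and then invoking Theorem 5.3 of \cite{RS2} \emph{mutatis mutandis}, which matches your curvature computation (your $\frac{3i}{8}\sigma\wedge\bar{\sigma}$ term is exactly $\frac{3}{4}\sigma^{1}\wedge\sigma^{2}$) and your deferral to \cite{RS2} for the Dirac-operator bookkeeping. The additional details you supply --- the splitting of the Dirac equation using $\sigma^{0}(X_{+})=0$ and the equivariance relation \eqref{eq:higgs field eqivariance}, and $\star\sigma^{0}=\frac{1}{2}\sigma^{1}\wedge\sigma^{2}$ from the metric \eqref{eq:metric using left-invariant one-forms} and orientation \eqref{eq:orientation using left-invariant one-forms} --- are consistent with the paper's conventions and simply flesh out what the paper delegates to \cite{RS2}.
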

The proof from \cite{RS2} easily follows through once we note that the curvature due to $A'$ is
\begin{equation}
F_{A'}=F_{A}+\frac{3}{4}\sigma^{1}\wedge\sigma^{2}=\left(\lambda\vert\Phi\vert^{2}-\frac{1}{4}\right)\sigma^{1}\wedge\sigma^{2}.
\end{equation}
Finally Corollary 5.5 from \cite{RS2} still holds, leading to vortex magnetic modes on flat $\R^{2,1}$. This result relies on the properties of maps $H,G:\R^{2,1}\to SU(1,1)$, the inverse stereographic and gnomonic projections. To avoid getting bogged down in the details we do not give these maps here and refer the reader to \cite{RS2} for the details.

The astute reader will notice that we have so far said nothing about constructing vortex magnetic modes from Jackiw-Pi, or Laplace, vortices. This is because both Jackiw-Pi and Laplace vortices are related to the group $SE_{2}$ which does not possess a bi-invariant metric. In fact the metric from Eq.~\eqref{eq:inverse metric on group} is singular\footnote{Eq.~\eqref{eq:inverse metric on group} says that the inverse metric is degenerate which is equivalent to the metric being singular.} so we cannot construct a Dirac operator in the same way.  A potential approach to fixes this problem is to centrally extend $SE_{2}$ to the Nappi-Witten space \cite{NW} which has a Lorentzian metric. This central extension would not affect the construction in the other cases and the hope is that it will enable the construction of vortex magnetic modes from Jackiw-Pi vortices. This is a current direction of research.

\section{Vortices and instantons}
\label{sec:vortices and instantons}
The representation of vortices as non-abelian connections in three dimensions given here is an alternative to viewing vortices as symmetric, non-abelian instantons on flat $\R^{4}$. In \cite{CD} it was shown that all five of the integrable vortex equations can be constructed as the dimensional reduction of an appropriate (anti-) self dual Yang-Mills theory on $M_{0}\times N_{0}$, where $N_{0}$ is a Riemann surface with constant Gauss curvature $-K_{0}$. This construction is based on the general story of gauge fields which posses space time symmetries introduced in \cite{FM}. In \cite{RS1,RS2} it was observed that there is an interesting relationship between the gauge group of the Yang-Mills theory and the Lie algebra that the Cartan connection is valued in. In  short if the Cartan group is $\mathbb{H}^{1}_{\lambda}$ then the instanton gauge group is $\mathbb{H}^{1}_{-\lambda}$. 

%
%

In our conventions the construction from \cite{CD} considers instantons on the four manifold $\R^{4}\simeq M_{0}\times N_{0}$ with gauge group $\mathbb{H}^{1}_{-\lambda}$ that are equivariant with respect to the action of $\mathbb{H}^{1}_{-\lambda_{0}}$. This amounts to the instanton being independent of the $N_{0}$ factor and thus reducing to  a $(\lambda_{0},\lambda)$ vortex on $M_{\lambda_{0}}$. Explicitly the instanton is given by the $\mathbb{H}^{1}_{-\lambda}$-connection
\begin{equation}
\mathcal{A}_{\text{CD}}=-\left(a-\Gamma_{N_{0}}\right)t_{0}+\frac{i\phi}{2}\bar{e}_{N_{0}}t_{-}-\frac{i\bar{\phi}}{2}e_{-N_{0}}t_{+},
\end{equation}
with $(\phi,a)$ the $(\lambda_{0},\lambda)$ vortex on $M_{0}$ and $e_{N_{0}},\Gamma_{N_{0}}$ the complexified frame and spin connection on $N_{0}$. From Corollary \ref{cor:vortex as a flat connection} a $(\lambda_{0},\lambda)$ vortex is equivalent to a flat $\mathbb{H}^{1}_{\lambda}$-connection
\begin{equation}
A=-\left(a+\Gamma_{0}\right)t_{0}+\frac{i\phi}{2}e_{0}t_{-}-\frac{i\bar{\phi}}{2}\bar{e}_{0}t_{+}.
\end{equation}

It is interesting to contrast the two connections. $\mathcal{A}_{\text{CD}}$ is an anti- self dual connection on a conformally flat four-manifold while $A$ is a flat connection on a Riemann surface. This manifests itself in the fact that $A$ only depends on information on $M_{0}$, the vortex $(\phi,a)$ and the frame field and spin connection. On the other hand $\mathcal{A}_{\text{CD}}$ depends on information from both $M_{0}$ and $N_{0}$. Another difference that should be noted is that while we have used the same notation for the generators, $t_{0}, t_{\pm}$ they are not exactly the same, the key difference is in  $t_{+}$. For $\mathbb{H}^{1}_{\lambda}$ the explicit form of $t_{+}$ is
\begin{equation}
t_{+}=\begin{pmatrix}
0&i\lambda\\
0&0
\end{pmatrix}.
\end{equation}
This means the sign in $t_{+}$ is different for the two connections.

However, we know that the flat connection in two dimensions is related to a flat connection on the group manifold $\mathbb{H}^{1}_{\lambda_{0}}$ given by Eq.~\eqref{eq:potential in vortex gauge}. An immediate question is if there is a way to go directly between the instanton and the three dimensional Cartan connection. At the moment we only know how to pass between them by going through the vortex in two dimensions. There are definitely key differences in their construction with $A$ being constructed from the pullback of the left-invariant Maurer-Cartan one-form on $\mathbb{H}^{1}_{\lambda}$ while $\mathcal{A}_{\text{CD}}$, following the general construction in \cite{FM}, is constructed from  left-invariant data from $\mathbb{H}^{1}_{-\lambda_{0}}$ and right-invariant data from $\mathbb{H}^{1}_{-\lambda}$.

Finally consider the diagram 
\begin{equation}
\xymatrix{\mathbb{H}^{1}_{-\lambda_{0}}\times \mathbb{H}^{1}_{\lambda_{0}}\ar[d]^{\pi}\ar[r]^{U} & \mathbb{H}^{1}_{\lambda}\times \mathbb{H}^{1}_{-\lambda}\ar[d]^{\pi}\\ N_{0}\times M_{0}\ar[r]^{f} & M\times N}
\end{equation}
where $f:M_{0}\to M$ is the rational map defining a vortex and $U:\mathbb{H}^{1}_{\lambda_{0}}\to \mathbb{H}^{1}_{\lambda}$ is the bundle map encountered in Theorem \ref{vortex config to bundle mp and back}. From the instanton point of view $\mathbb{H}^{1}_{-\lambda_{0}}$ would be the symmetry group and $\mathbb{H}^{1}_{-\lambda}$ is the gauge group. This could be flipped round to 
\begin{equation}
\xymatrix{\mathbb{H}^{1}_{\lambda_{0}}\times \mathbb{H}^{1}_{-\lambda_{0}}\ar[d]^{\pi} & \mathbb{H}^{1}_{-\lambda}\times \mathbb{H}^{1}_{\lambda}\ar[d]^{\pi} \ar[l]^{V}\\ M_{0}\times N_{0} & N\times M\ar[l]^{g}}
\end{equation}
with $g:N\to N_{0}$ a rational map defining a vortex and $V:\mathbb{H}^{1}_{-\lambda}\to \mathbb{H}^{1}_{-\lambda_{0}}$ a bundle map. Now the instanton point of view has $\mathbb{H}^{1}_{\lambda}$ as the symmetry group and $\mathbb{H}^{1}_{\lambda_{0}}$ as the gauge group. 

This suggests that at the level of the groups there is a potential duality between the different vortex equations. This duality takes the $(\lambda_{0},\lambda)$ vortex equations to the $(-\lambda_{0},-\lambda)$ vortex equations. 

The Hyperbolic and Popov vortex equations are exchanged under this, as are the Bradlow and Jackiw-Pi vortices while both the Ambj\o rn-Olesen and Laplace vortex equations are ``self-dual'' in this sense.

\section{Conclusions and outlook}
\label{sec:conclusion}
This paper has considered the general problem of giving a geometric description of integrable abelian vortices as non-abelian flat connections. This provides a unified three dimensional interpretation of vortices, complementing the two dimensional metric geometry interpretation given by Baptista~\cite{Baptista}, and the four dimensional description of vortices as symmetric instantons \cite{CD}. 

The story is summarised in Fig.~\ref{fig:unified picture summary figure} where the most important maps, spaces and equations are given. This gives a unifying picture, generalising the work of \cite{RS1,RS2} to include all of the integrable vortex equations considered in \cite{Manton2}. As well as establishing the relationship between vortices and Cartan geometry we have also discussed proposals to construct solutions to massless Dirac equations from vortices.

A comparison between the Cartan connection picture and the instanton picture of vortices leads to some intriguing comparisons. Not least the fact that there seems to be a duality at the level of the groups where the $(\lambda_{0},\lambda)$ vortex equations are sent to the $(-\lambda_{0},-\lambda)$ vortex equations.

Recently the story of unified vortex equations has been extended. In \cite{walton_2021_exotic} non-abelian analogues of the integrable abelian vortex equations have been considered. While in \cite{GR21}, magnetic defects were added which preserved the integrability of the abelian vortex equations. It is unknown if in either of these cases there is still a geometric understanding of the vortices. Either in the Baptista sense, or in the Cartan geometry sense of this paper. Understanding these cases is a direction worth pursuing.

\paragraph{\bf Acknowledgements} The initial work for this project was carried out during my PhD. I acknowledge an EPSRC funded PhD studentship and thank my supervisor Bernd Schroers for introducing me to the topic and for many useful discussions. I also want to thank Chris Halcrow for providing comments on a draft of this paper.

%




\end{document}